\renewcommand{\leq}{\leqslant}
\renewcommand{\le}{\leqslant}
\renewcommand{\ge}{\geqslant}
\newcommand{\Oh}{\mathcal{O}}
\newcommand{\E}{\mathsf{E}}
\newcommand{\G}{\mathcal{G}}
\newcommand{\e}{\mathrm{e}}
\newcommand{\I}{\mathbf{I}}
\newcommand{\U}{\mathbf{U}}
\newcommand{\NN}{\mathbf{N}}
\newcommand{\Or}{\overline{\rho}}
\renewcommand{\Pr}[1]{\ensuremath{\operatorname{\mathbf{Pr}}\left[#1\right]}}
\newcommand{\Ex}[1]{\ensuremath{\operatorname{\mathbf{E}}\left[#1\right]}}
\newcommand{\Var}[1]{\ensuremath{\operatorname{\mathbf{Var}}\left[#1\right]}}
\newcommand{\vol}[1]{\ensuremath{\operatorname{\mathtt{vol}}(#1)}}
\newtheorem*{theorem*}{Theorem}
\newtheorem{theorem}{Theorem}  
\newtheorem{lem}[theorem]{Lemma}
\newtheorem{definition}{Definition}
\newtheorem{obs}[theorem]{Observation}
\newtheorem{clm}[theorem]{Claim}
\newtheorem{cor}[theorem]{Corollary}
\newtheorem{rem}[theorem]{Remark}
\numberwithin{theorem}{section}
\title{Tight Analysis of  Asynchronous Rumor Spreading in Dynamic Networks \thanks{ The  authors are supported by the Australian Research Council Discovery Project DP170102794.}}
\author{
	Ali Pourmiri\thanks{Macquarie University, Sydney, Australia, \texttt{ali.pourmiri@mq.edu.au}}
	\and Bernard Mans\thanks{Macquarie University, Sydney, Australia, \texttt{bernard.mans@mq.edu.au}} }
\date{April 24, 2020}
\definecolor{purple}{rgb}{0.0, .2, .4}
\begin{document}
	\maketitle
	
	\begin{abstract}
		The asynchronous rumor spreading algorithm  propagates a piece of information, the so-called rumor, in a network.
		Starting with a single informed node,  each node is associated with an  exponential time clock with rate $1$ and
		calls a random neighbor in order to possibly exchange the rumor.  A well-studied parameter associated with the algorithm is the \emph{spread time}, which  is the first time when all nodes of a network are informed with high probability\footnote{Event $\mathcal{E}_n$ holds with high probability (w.h.p.) if  $\Pr{\mathcal{E}_n}=1-n^{-c}$, for any given  constant $c>1$.}.
		We consider  spread time of the algorithm in 
		any dynamic evolving network,  $\G=\{G^{(t)}\}_{t=0}^{\infty}$, which is a sequence of arbitrary graphs with the same set of nodes exposed at discrete time step $t=0,1\ldots$.   We observe that besides the expansion profile of a dynamic network, the degree distribution of nodes over time effect the spread time. We establish upper bounds for the spread time in terms of graph conductance and \emph{diligence}.  For a given connected simple graph $G=(V,E)$, the diligence of cut set $E(S, \overline{S})$ is defined as 
		$\rho(S)=\min_{\{u,v\}\in E(S,\overline{S})}\max\{\bar{d}/d_u, \bar{d}/d_v\}$ where $d_u$ is the degree of $u$ and $\bar{d}$ is the average degree of nodes in the one side of the cut  with smaller volume (i.e., $\vol{S}=\sum_{u\in S}d_u$). The diligence of $G$  is also defined as $\rho(G)=\min_{ \emptyset\neq S\subset V}\rho(S)$. For some positive number $\rho$, $G$ is called $\rho$-diligent if $\rho(G)\ge \rho$.
		
		 We show that the spread time of  the algorithm in $\G$ is bounded by $T$,  where $T$ is the first time that
		 $\sum_{t=0}^T\Phi(G^{(t)})\cdot\rho(G^{(t)})$
		   exceeds $C\log n$, where $\Phi(G^{(t)})$ denotes the conductance of $G^{(t)}$ and  $C$ is a specified constant. Moreover, 
		for every $1/\sqrt{n}\le \rho\le 1$, we present a sequence of $\rho$-diligent graphs $G^{(0)},G^{(1)},\ldots$ where  the upper bound matches the spread time up to  $o(\log^2 n)$ factor. 
		
		We also define the \emph{absolute diligence} as $\overline{\rho}(G)=\min_{\{u,v\}\in E}\max\{1/d_u,1/d_v\}$. 
	 We present  upper bound $T$ for the spread time in terms of absolute diligence, which is the first time when $\sum_{t=0}^T\lceil\Phi(G^{(t)})\rceil\cdot \Or(G^{(t)})\ge 2n$. Similarly, we construct dynamic networks where the upper bound is tight up to a constant.   
	  Since for every nonempty graph $G$, $\overline{\rho}(G)\ge {1/(n-1)}$, we conclude that the spread time is bounded by $\Oh(n^2)$ in connected dynamic networks.

		Additionally, we show  that, unlike  static networks, there are striking dichotomies between the spread time of the standard asynchronous  and synchronous algorithms   in  dynamic networks. Hence, one cannot generally estimate the spread time of  synchronous in terms of asynchronous algorithms or vice versa.

	\end{abstract}

\section{Introduction}\label{intro}
Randomized rumor spreading algorithms spread a piece of information, the so-called rumor, in a given network. Initially, an arbitrary single node becomes aware of a rumor, then the algorithm proceeds in synchronized rounds. In each round,  nodes contact a random neighbor and they exchange the rumor if at least one of them knows it, which is known as the push-pull algorithm. 
Demers et. al \cite{DGH+87} first introduced the randomized rumor spreading algorithms to consistently distribute an update in a network of databases. Besides the algorithm simplicity, it  is scalable  in terms of the network size and robust to the node/link failure (e.g., see \cite{FPRU90}). Later on, the rumor spreading  algorithms have been applied in a wide range of settings such as distributed averaging \cite{Boyd2006}, resource discovery \cite{Harchol-Balter1999}, and etc..
The {\it spread time} is a well-studied parameter associated with the rumor spreading algorithms which is the first time when all nodes have been informed with high probability. The spread time of the push-pull algorithm has been studied   on various network topologies, \cite{DFF11,BEF08,FP10}. 
In  \cite{CGLP18}, it has been shown that the spread time of the push-pull in any static $n$-node network  is at most $\Oh(\log n/\phi)$, where $\phi$ denotes the conductance of the network.  In many   distributed networks such as peer-to-peer, social and ad hoc networks, the nodes may not act in a synchronized manner and hence seeking  a more realistic and applicable  mechanism, Boyd et al. \cite{Boyd2006} proposed the asynchronous randomized rumor spreading algorithm. Here, each node has its own  exponential time clock of rate $1$ and  contacts a random neighbor according to arrival times of its Poisson process with  rate $1$. In contrast to  both synchronous and asynchronous  rumor spreading algorithms that have been investigated in various static network typologies \cite{ACMW15,GNW16}, we have known much less regarding randomized rumor spreading algorithms in dynamic networks.
 Here we consider popular \emph{dynamic  evolving network},
  which is a sequence of graphs  $\G=\{G^{(t)}\}_{t=0}^\infty$ arriving  at a sequence of discrete times, $t=0,1\ldots$. They all have the same set of nodes of size $n$, but they may have a different set of edges.
 The model  has gained popularity  (e.g., \cite{AKL18,SL19} ) as it captures various features of  real world networks such as  mobile wireless communication networks, where the set of devices is  unchanged but their relative proximity  changes over time.

 \subsection{ Our Contribution }  
   We consider the asynchronous rumor spreading algorithm in dynamic network $\G=\{G^{(t)}\}_{t=0}^\infty$, where each node  has been associated with an exponential  time clock of rate  $1$.
 Then, as soon as the exponential time clock of a node ticks,  the node picks a random neighbour and they exchange the rumor if at least one of them knows it. Since the  network is dynamic, the underlying communication network may change at discrete time steps.
  Unlike the static networks where the algorithm spreads a rumor  in any connected network after  at most $\Oh(n\log n)$ time with high probability \cite{ACMW15}, we will see that the algorithm spreads the rumor in any connected dynamic networks after at most  $\Oh(n^2)$ time and there are connected dynamic networks where the algorithm spreads the rumor in $\Theta(n^2)$ time, with high probability.
  { It is well-known that any two adjacent nodes communicate within an exponential time distribution with rate $1/d_u(\tau)+1/d_v(\tau)$, where $d_u(\tau)$ is the degree of $u$ at time $\tau$.   Using the order statistics of exponential time distribution the first node after time $\tau\in [0,\infty)$ gets informed within an exponential time distribution with rate $\lambda(\tau)$, which is 
  	\begin{align}\label{intuit}
  	\lambda(\tau)=\sum_{\{u,v\}\in E(\I_\tau, \U_\tau)}\left\{\frac{1}{d_u(\tau)}+\frac{1}{d_{v}(\tau)}\right\},
  	\end{align}  where $E(\I_\tau, \U_\tau)$ is the set of edges crossing  set of informed nodes until time $\tau$ (i.e., $\I_\tau$) and non-informed nodes $\U_\tau$.
As the underlying communication network changes over time,  besides the expansion profile of set of informed nodes, the degree distribution nodes also   directly effect on speed of  the rumor spreading in dynamic network. Therefore,   
 }
   in order to analyze   the spread time of the asynchronous algorithm in dynamic evolving networks, we introduce the notion of \emph{diligence} and  \emph{absolute diligence} of a given connected network $G=(V, E)$. For every $\emptyset\neq S\subset V(G)$, let 
   $E(S,\overline{S})$ be the set of edges crossing $S$ and its complement, $\overline{S}$. Also, define  
    $\vol{S}=\sum_{u\in S}d_u$, where  $d_u$ is the degree of $u$ in $G$, and $\vol{G}=\sum_{u\in V}d_u$. 
    For every $S\subset V$ with $0<\vol{S}\le \vol{G}/2$,  the diligence of cut $E(S,\overline{S})$ is defined as
     \[
     \rho(S)={\min_{\{u,v\}\in E(S,\overline{S})}\max\{\bar{d}(S)/d_u, \bar{d}(S)/d_v\}},
    \]
    where $\bar{d}(S)=(\sum_{u\in S}{d_u})/|S|$.
 The  diligence  of $G$ is defined as
\[
 {\rho}(G)=\min_{\substack{S\subset V\\0<\vol{S}\le \vol{G}/2}}\rho(S),
 \]
 where we always have $1/(n-1)\le \rho(G)\le 1$, provided $G$ is connected. We define $\rho(G)=0$, if $G$ is not connected. 
 For the sake of brevity for every $G^{(t)}\in \G$, we denote $\rho(G^{(t)})$ by $\rho(t)$ and $G^{(t)}$ is called $\rho$-diligent for every $\rho(t)\ge \rho$.
 
 
  Suppose that  $G=(V,E)$ is a nonempty graph, \emph{absolute diligence} of $G$ is defined as  
  \[
  \overline{\rho}(G)=\min_{\{u,v\}\in E}\max\{1/d_u,1/d_v\}.
  \] Similarly,  for every  $G^{(t)}\in \G$ we use  $\overline\rho(t)$ to denote $\overline\rho(G^{(t)})$ and $G^{(t)}$ is called \emph{absolute $\rho$-diligent, for every $\Or(t)\ge \rho$}.
 
    One may observe that if $\G=\{G^{(t)}\}_{t=0}^\infty$ is a sequence of stars (i.e., complete bipartite graph $K_{1, n-1}$), then  every $G^{(t)}\in \G$ is  $1$-diligent and absolutely $1$-diligent, as well. If the dynamic network $\G=\{G^{(t)}\}_{t=0}^\infty$ is regular, which means
  for every  $t=0,1,\ldots$, $G^{(t)}$ is $\Delta_t$-regular. 
  Then every $G^{(t)}\in \G$ is  $1$-diligent. Recall that  for every network $G$, the\emph{ conductance of  $G$}, denoted by $\Phi(G)$, is defined as 
       \begin{align}\label{conduct}\Phi(G)=\min_{\emptyset \neq S\subset V(G)}\frac{|E(S,\overline{S})|}{\min\{\vol{S},\vol{\overline{S}}\}}.
       \end{align}
{       
By definition of the graph conductance and diligence, one can easily obtain a lower bound for $\lambda(\tau)$ (defined in Equation (\ref{intuit})) in terms of those graph parameters as follows. Let $S$ denote the smaller side of cut $E(\I_\tau, \U_\tau)$ in terms of its volume, which is either $\I_\tau$ or $\U_\tau$ and $\vol{S}=\min\{\vol{\I_\tau}, \vol{\U_\tau}\}$.
\begin{align}\label{intu2}
\lambda(\tau)&\ge \sum_{\{u,v\}\in E(\I_\tau, \U_\tau)}\max\{1/d_u(\tau),1/d_{v}(\tau)\}\nonumber\\
&\ge  \rho(S) \frac{|S|}{\vol{S}}|E(\I_\tau, \U_\tau)|\nonumber\\
&\ge \rho(\tau)\cdot {|S|}\cdot  \Phi(G^{(\tau)}) \min\{\vol{\I_\tau}, \vol{\U_\tau}\}/{\vol{S}}\nonumber\\
& \ge \Phi(G^{(\tau)})\cdot \rho(\tau)\cdot
\min\{|\I_\tau|, |\U_\tau|\},
\end{align}
where $G^{(\tau)}$ refers to the underlying graph at time $\tau$ and $|S|\ge \min\{|\I_\tau|, |\U_\tau|\}$.        
 Roughly speaking, the lower bound tells us in each unit time interval the expected number of informed nodes at least increases  by a  multiplicative  factor  $\Phi(G^{(\tau)})\cdot \rho(\tau)$ of $\min \{|\I_\tau|, |\U_\tau|\} $. Having applied the above inequality  and   the \emph{theory of non-homogeneous Poisson processes}  we  present the following theorem regarding the   spread time of the algorithm in  dynamic evolving networks. }

     \begin{theorem}\label{thm:acc}
     	Suppose that  $\G=\{G^{(t)}\}_{t=0}^\infty$ be an $n$-node dynamic evolving network. Also, assume that  a node of $G^{(0)}$ is aware of a rumor. Let $c>1$ be an arbitrary constant and  define 
     	 \[
     	 T(\G,c)=\min\left\{t:\sum_{p=0}^t\Phi(G^{(p)})\cdot\rho(p)\ge {C\log n} \right\},
     	 \]
     	 where $c_0=1/2-1/\e$, $C=(10c+20)/c_0$, and $\rho(p)=\rho(G^{(p)})$.
     	 Then, with probability $1-n^{-c}$, the algorithm propagates the rumor through $\G$ within at most  $T(\G,c)$ time.    
       \end{theorem}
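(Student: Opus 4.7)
The plan is to couple the evolution of $|\I_\tau|$ with a pure-birth chain whose jump rate is governed by $\Phi(G^{(\lfloor \tau \rfloor)}) \rho(\lfloor \tau \rfloor)$, then apply a deterministic time-change to reduce the problem to a standard tail estimate for a sum of independent exponentials. By the bound in (\ref{intu2}) and the memorylessness of the exponential clocks, whenever $|\I_\tau|=k$ the next infection arrives at rate at least $\mu_k(\tau) := \Phi(G^{(\lfloor \tau \rfloor)}) \rho(\lfloor \tau \rfloor) \min\{k, n-k\}$. Hence, conditional on $\G$, the first-passage time of $|\I_\tau|$ to $n$ is stochastically dominated by that of a non-homogeneous pure-birth chain $M$ (started at state $1$) with rate $\mu_k(\cdot)$ out of state $k$, and it suffices to show that $M$ reaches state $n$ by time $T(\G,c)$ except with probability at most $n^{-c}$.

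Next I would deterministically time-change: set $\Lambda(\tau) := \int_0^\tau \Phi(G^{(\lfloor s \rfloor)}) \rho(\lfloor s \rfloor)\, ds$. Then $M\circ\Lambda^{-1}$ is a time-homogeneous pure-birth chain with jump rate $\min\{k, n-k\}$ at state $k$, so writing $\tau_k$ for the entry time of $M$ into state $k$, the quantities $\tilde T_k := \Lambda(\tau_{k+1}) - \Lambda(\tau_k)$ are independent with $\tilde T_k \sim \mathrm{Exp}(\min\{k, n-k\})$. Since $\Lambda(T(\G,c)) \geq C \log n$ by definition and $\Lambda$ is nondecreasing, it remains to prove
\[
\Pr{\sum_{k=1}^{n-1} \tilde T_k > C \log n} \leq n^{-c}.
\]

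To bound this sum I would split it into a doubling phase ($k = 1, \ldots, \lfloor n/2 \rfloor$, with rates $1,2,\ldots$) and a halving phase ($k = \lceil n/2 \rceil, \ldots, n-1$, with rates $\ldots,2,1$). By the classical order-statistic identity relating Yule times to exponential extremes, each partial sum is distributed as the maximum of at most $n/2$ i.i.d.\ $\mathrm{Exp}(1)$ random variables, for which a union bound yields $\Pr{\max > \alpha \log n} \leq (n/2)\, n^{-\alpha}$. Choosing $\alpha$ proportional to $c$ and summing the two phases bounds $\sum_{k=1}^{n-1} \tilde T_k$ by $C \log n$ with failure probability at most $n^{-c}$, for the $C$ specified in the theorem (with some slack to absorb the cruder inequality $|S| \geq \min\{|\I_\tau|, |\U_\tau|\}$ used in (\ref{intu2})).

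The main technical subtlety is formalizing the time-change when $\mu_k(\tau)$ depends on both the chain's own state and on $\tau$ through a function that is only piecewise-constant. I would handle this by conditioning on the entire evolving network $\G$, turning $\phi(\tau) := \Phi(G^{(\lfloor \tau \rfloor)}) \rho(\lfloor \tau \rfloor)$ into a deterministic step function of $\tau$, then invoking the strong Markov property of $M$ at each of its jumps and using that $\Lambda$ is continuous and strictly increasing wherever $\phi > 0$, so $\Lambda^{-1}$ is well-defined on $[0,\Lambda(\infty))$. Intervals where $\phi \equiv 0$ leave both $M$ and $\Lambda$ unchanged, which is consistent with the way $T(\G,c)$ is defined as a threshold on the cumulative $\Phi\cdot\rho$ budget.
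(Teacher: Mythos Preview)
Your proposal is correct and takes a genuinely different route from the paper's proof. The paper proceeds via its Lemma~\ref{lem:key}: it fixes a moment $\tau$, sets up a non-homogeneous Poisson counter for the number of new infections, and uses the Poisson lower-tail bound of Lemma~\ref{ineq:pois} to show that over an interval whose accumulated $\Phi\cdot\rho$-budget is $2\alpha$, the count $\min\{I_\tau,U_\tau\}$ grows (or shrinks) by a factor $3/2$ (or $1/2$) except with probability $\e^{-c_0\alpha\min\{I_\tau,U_\tau\}}$. It then chains $\Theta(\log n)$ such doubling/halving sub-phases, choosing $\alpha_i\asymp (c\log n)/(3/2)^i$ so that each sub-phase fails with probability $n^{-c}$, and sums the resulting budget allotments geometrically to obtain the constant $C=(10c+20)/c_0$.

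Your approach instead globalizes the argument: one time-change by $\Lambda(\tau)=\int_0^\tau \Phi\cdot\rho$ turns the dominated birth chain into a homogeneous Yule-type chain, after which the R\'enyi order-statistics identity gives the exact law of the hitting time as the maximum of i.i.d.\ exponentials. This is cleaner and yields a noticeably smaller admissible $C$ (of order $2(c+1)$ rather than $(10c+20)/c_0$), so the stated $C$ is comfortably sufficient. The paper's phase-by-phase method, on the other hand, is more modular --- Lemma~\ref{lem:key} is reusable as a black box --- and avoids having to justify the time-change when $\Lambda$ has flat stretches. One small slip in your write-up: by the definition of $T(\G,c)$ you only get $\Lambda(T(\G,c)+1)\ge C\log n$, hence $\Lambda(T(\G,c))\ge C\log n-1$ (since $\Phi\cdot\rho\le 1$); this is harmless given the slack in $C$, but worth stating precisely.
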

We also construct a family of dynamic networks for which the spread of the algorithm almost matches the presented upper bound.
 \begin{theorem}\label{thm:rho}
 For every given $\frac{1}{\sqrt{n}}\le \rho\le 1$,	there exists $n$-node dynamic evolving network  ${\G}(n,\rho)=\{G^{(t)}\}_{t=0}^\infty$ such that (1) $G^{(t)}\in \G$ is  $\Theta(\rho)$-diligent  and (2)
 	 for every  $G^{(t)}\in {\G}(n,\rho)$,  $\Phi(G^{(t)})=\Theta(1/(k+n\rho^2))$, where  $k=\Theta(\log n/\log\log n)$. Suppose that a node of $G^{(0)}$ is aware of a rumor and the  algorithm starts spreading the rumor through ${\G}(n,\rho)$. 
 	 Then, with high probability, the rumor spreads in $\Omega(n/(\rho\cdot k))$ time. Moreover, this shows that the upper bound for the spread time obtained by Theorem \ref{thm:acc}  is at most $o(\log^2 n )$ factor above the spread time of  the algorithm in ${\G}(n,\rho)$.   
 	 \end{theorem}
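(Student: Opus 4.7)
The proof has four ingredients: an explicit construction of $\G(n,\rho)$, verification of its diligence and conductance, a Poisson-based lower bound on the spread time, and an arithmetic comparison against Theorem~\ref{thm:acc}.

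\textbf{Construction.} Writing $k=\lceil\log n/\log\log n\rceil$, I would build each snapshot $G^{(t)}$ around a thin ``hub bottleneck'': partition $V$ into $B$ equal-size blocks $V_1,\ldots,V_B$ of size $s$, inside each block place a regular (or mild-expander) graph of average degree $\bar d$, and join each consecutive pair of blocks $V_i,V_{i+1}$ by a small number of cross-edges whose endpoints are ``hubs'' of degree $\Theta(\bar d/\rho)$. The parameters are tuned so that $s\bar d=\Theta(k+n\rho^2)$, which will give the target conductance, while the ratio $\bar d/(\bar d/\rho)=\rho$ at hub endpoints will give the target diligence. Dynamism is used only to refresh the intra-block edges and relabel the hub identities at each integer time, so that the bottleneck shape is preserved throughout and no snapshot offers the rumor a shortcut between $\I_\tau$ and $\U_\tau$.

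\textbf{Verification.} The worst cut in every snapshot is a chain cut (prefix of blocks versus suffix), which has $\Theta(1)$ crossing edges against volume $\Theta(s\bar d)$, giving conductance $\Theta(1/(k+n\rho^2))$; since every crossing edge passes through two hubs of degree $\bar d/\rho$, the diligence of this cut is $\rho$. A case analysis on the remaining cuts — either splitting a block or taking non-consecutive blocks — shows that the intra-block expander property drives conductance above $\Theta(1/(k+n\rho^2))$, and that any cut with non-hub crossing edges (which have endpoints of degree $\Theta(\bar d)$) has diligence at least $\Theta(1)\ge\rho$. Thus $\Phi(G^{(t)})=\Theta(1/(k+n\rho^2))$ and $\rho(G^{(t)})=\Theta(\rho)$ as required.

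\textbf{Spread time lower bound.} Returning to (\ref{intuit}), every edge in $E(\I_\tau,\U_\tau)$ is either a cross-edge, contributing $2\rho/\bar d$, or an intra-block edge, contributing $2/\bar d$; the expander property within each block caps the total number of intra-block boundary edges, and only $\Theta(1)$ cross-edges cross any chain cut, so $\lambda(\tau)=O(\rho k)$ uniformly while $|\I_\tau|\le n/2$. The standard stochastic-dominance coupling with a homogeneous Poisson process (the same device underlying the ``theory of non-homogeneous Poisson processes'' invoked in Theorem~\ref{thm:acc}) then dominates the number of firings in $[0,t]$ by $\mathrm{Poisson}(Ct\rho k)$. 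A Chernoff bound for Poisson variables gives that $t=cn/(\rho k)$ produces fewer than $n/2$ firings with probability $1-n^{-c'}$ for a small enough constant $c$, yielding the claimed lower bound $\Omega(n/(\rho k))$ with high probability.

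\textbf{Factor comparison and main obstacle.} Plugging the parameters into Theorem~\ref{thm:acc} gives $T(\G,c)=\Theta(\log n/(\Phi\rho))=\Theta(k\log n/\rho+n\rho\log n)$, which divided by $\Omega(n/(\rho k))$ yields a ratio of $O(k^2\log n/n+k\rho^2\log n)=O(k\log n)=O(\log^2 n/\log\log n)=o(\log^2 n)$ uniformly over $\rho\in[1/\sqrt n,1]$. The hard part is the combinatorial fragility of the diligence constraint: a single low-degree cross-edge would knock $\rho(G^{(t)})$ from $\Theta(\rho)$ to $\Theta(1)$, so hub positions must be tightly controlled; and the conductance must be held at $\Theta(1/(k+n\rho^2))$ across both regimes $\rho\le\sqrt{k/n}$ (where the bound behaves like $1/k$) and $\rho\ge\sqrt{k/n}$ (where it behaves like $1/(n\rho^2)$). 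Making the same dynamic template cover both regimes while preserving every invariant at every snapshot is the engineering crux of the proof.
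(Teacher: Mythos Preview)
Your lower-bound step does not work. The argument rests on the uniform estimate $\lambda(\tau)=O(\rho k)$, but this is incompatible with the parameters you are required to realise: by~(\ref{intu2}), any snapshot with conductance $\Theta(1/(k+n\rho^2))$ and diligence $\Theta(\rho)$ satisfies $\lambda(\tau)\ge\Theta\bigl(n\rho/(k+n\rho^2)\bigr)$ whenever $\min\{I_\tau,U_\tau\}=\Theta(n)$; at $\rho=1/\sqrt n$ this is $\Theta(\sqrt n/k)$, which dwarfs $\rho k=k/\sqrt n$. The specific sentence ``the expander property within each block caps the total number of intra-block boundary edges'' is backwards---expansion \emph{lower}-bounds boundaries---so while a block is partially informed the intra-block cut alone contributes $\Theta(s/\bar d)$ to $\lambda$. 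In fact no construction with the stated $\Phi$ and $\rho$ can have spread time $\Omega(n/(\rho k))$: at $\rho=1/\sqrt n$ that quantity is $\Omega(n^{3/2}/k)$, which exceeds the upper bound $O(\sqrt n\,k\log n)$ from Theorem~\ref{thm:acc}. The paper's proof actually establishes $\Omega(n\rho/k)$ (see~(\ref{ineq:lower})); the $n/(\rho k)$ in the statement is a typo, and your ratio arithmetic is therefore computed against the wrong target.

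The paper's mechanism is not a rate bound but a \emph{depth} bound, and the dynamism is adversarial rather than cosmetic. Each snapshot $H_{k,\Delta}(A_t,B_t)$ with $\Delta=\lceil1/\rho\rceil$ separates the currently informed side $A_t$ from the uninformed side $B_t$ by a chain $S_0,\ldots,S_k$ of size-$\Delta$ clusters, consecutive clusters joined by complete bipartite graphs, with the two bulk regions $A_t\setminus S_0$ and $B_t\setminus\bigcup_{i\ge1}S_i$ made into constant-degree expanders (this is what pins $\Phi$ and $\rho$ in Observation~\ref{obs:hn}). The key step, Lemma~\ref{lem:string1}, shows that in a single time unit the rumor reaches $S_k$ with probability at most $(2^k/k!)\Delta=n^{-\Omega(1)}$, via a coupling of push--pull on the chain with a ``forward $2$-push'' process followed by an iterated-integral computation of $\Ex{I(\tau,k)}$. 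Then, crucially, at the next integer time the chain is \emph{rebuilt from scratch} with fresh uninformed clusters drawn from $B_{t+1}$, so the rumor must restart the $k$-hop crossing every step. Hence at most $k\Delta=k/\rho$ nodes of $B_t$ are lost per step, and a union bound over $\Theta(n\rho/k)$ steps gives the lower bound. Your ``refresh intra-block edges and relabel hubs'' provides no such reset; without it, a static chain of your type is traversed in time linear in the number of blocks, and the role of $k$ disappears entirely.
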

We have to remark that by the construction of ${\G}(n,1/\sqrt{n})$, for every $G^{(t)}\in{\G}(n,1/\sqrt{n})$, we have $\Phi(G^{(t)})=\Theta(\log\log n/\log n)$. However, with high probability,  the algorithm spreads the rumor in at least $\Omega(\sqrt{n}/\log n)$ time, which shows the direct impact of the  the graph diligence on the spread time.  
 Next, we establish an upper bound for the spread time of the algorithm in terms of  absolute diligence of networks.
 \begin{theorem}\label{thm:n2}
 	Suppose that     $\G=\{G^{(t)}\}_{t=0}^\infty$ is 
 	an $n$-node dynamic network. Also, assume that  a node of $G^{(0)}$ is aware of a rumor. Define, 
 	\[
 	T_{abs}(\G)=\min\left\{t:\sum_{p=0}^t\lceil\Phi(G^{(p)})\rceil\cdot \Or(p)\ge {2n} \right\},
 	\]
 	where $\lceil\Phi(G^{(p)})\rceil=1$ if $G^{(p)}$ is connected, and zero otherwise.
 	Then, with high probability, the algorithm propagates the rumor through $\G$ within at most $T_{abs}(\G)$ time.
 \end{theorem}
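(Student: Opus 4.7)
The plan is to bound the termination time of the algorithm via a time-change of a non-homogeneous Poisson process, followed by a concentration bound for sums of independent exponential random variables. The first step is a deterministic lower bound on the instantaneous rate $\lambda(\tau)$ at which a new node becomes informed. Writing $\I_\tau$ and $\U_\tau$ for the sets of informed and uninformed nodes at time $\tau$, Equation~(\ref{intuit}) yields
\[
\lambda(\tau) \;\ge\; \sum_{\{u,v\}\in E(\I_\tau,\U_\tau)} \max\bigl\{1/d_u(\tau),\,1/d_v(\tau)\bigr\} \;\ge\; |E(\I_\tau,\U_\tau)|\cdot \Or(\lfloor\tau\rfloor).
\]
Whenever $G^{(\lfloor\tau\rfloor)}$ is connected and $\U_\tau\neq\emptyset$, the cut is non-empty, so $\lambda(\tau)\ge \Or(\lfloor\tau\rfloor)\cdot\lceil\Phi(G^{(\lfloor\tau\rfloor)})\rceil$; when the graph is disconnected the right-hand side is zero and the inequality is trivial. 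Hence, while any uninformed node remains, $\lambda(\tau)\ge f(\tau):=\Or(\lfloor\tau\rfloor)\cdot\lceil\Phi(G^{(\lfloor\tau\rfloor)})\rceil$.

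Next, let $0=\tau_0<\tau_1<\cdots<\tau_{n-1}$ be the successive times at which new nodes become informed, and set $F(t):=\int_0^t f(\sigma)\,d\sigma$; note that $F(T_{abs}(\G))\ge 2n$ by the definition of $T_{abs}(\G)$. Between $\tau_{i-1}$ and $\tau_i$ the informed set is frozen, so conditional on the history $\mathcal{F}_{\tau_{i-1}}$ the function $\sigma\mapsto\lambda(\sigma)$ is deterministic and piecewise constant (jumping only at integer times when $\G$ evolves). The standard non-homogeneous exponential waiting-time identity then gives
\[
\Pr{\tau_i-\tau_{i-1}>w \mid \mathcal{F}_{\tau_{i-1}}} \;=\; \exp\Bigl(-\int_{\tau_{i-1}}^{\tau_{i-1}+w}\lambda(\sigma)\,d\sigma\Bigr) \;\le\; \exp\bigl(-(F(\tau_{i-1}+w)-F(\tau_{i-1}))\bigr).
\]
Setting $E_i:=F(\tau_i)-F(\tau_{i-1})$, this shows that conditional on the past, $E_i$ is stochastically dominated by a unit exponential random variable.

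A standard Chernoff tail bound for sums of (conditionally) exponential variables — iterated via the moment generating function inequality $\Ex{\e^{tE_i}\mid\mathcal{F}_{\tau_{i-1}}}\le (1-t)^{-1}$ valid for $0<t<1$ — therefore yields
\[
\Pr{F(\tau_{n-1})>2n}\;=\;\Pr{\sum_{i=1}^{n-1}E_i>2n}\;\le\; \e^{-\Omega(n)},
\]
well below $n^{-c}$ for any fixed $c$ and sufficiently large $n$. On the complementary high-probability event, $F(\tau_{n-1})\le 2n\le F(T_{abs}(\G))$, and monotonicity of $F$ forces $\tau_{n-1}\le T_{abs}(\G)$, as required. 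I expect the main obstacle to be a rigorous justification of the conditional-survival identity in the dynamic setting: one must verify that, conditional on the state just after $\tau_{i-1}$, the next new-informed-node event is indeed driven by a non-homogeneous Poisson clock with instantaneous rate $\lambda(\sigma)$, even though the underlying graph changes at integer times and only the subset of Poisson ticks traversing the current cut $E(\I_\tau,\U_\tau)$ produces such events. This is a routine consequence of the superposition and memoryless properties of the independent rate-$1$ clocks attached to the vertices, but the piecewise-constant, graph-dependent rate requires care.
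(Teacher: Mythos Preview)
Your proof is correct and reaches the same conclusion as the paper's, but by a dual route. The paper bounds the \emph{count} of newly informed nodes at the fixed time $T_{abs}(\G)$: it argues that the counting process $N(\tau)$ dominates a non-homogeneous Poisson process with the deterministic rate $f(\tau)=\Or(\lfloor\tau\rfloor)\lceil\Phi(G^{(\lfloor\tau\rfloor)})\rceil$, so $N(T_{abs})$ dominates a Poisson variable of mean at least $2n$, and then applies the Poisson lower-tail bound of Lemma~\ref{ineq:pois} to obtain $\Pr{N(T_{abs})<n}\le \e^{-(1/2-1/\e)\cdot 2n}$. You instead bound the \emph{time} $\tau_{n-1}$ by time-changing through $F$: each compensator increment $E_i=F(\tau_i)-F(\tau_{i-1})$ is dominated by a unit exponential, and a Chernoff bound for sums of exponentials controls $\sum_{i}E_i$. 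The two arguments are essentially dual (count at a fixed time versus time to a fixed count), and the resulting tail probabilities are of the same order. Your compensator formulation has the merit of making the stochastic-domination step explicit---the paper's appeal to Theorem~\ref{thm:npp} glosses over the fact that $N$ is a pure birth process with state-dependent rate, not literally a non-homogeneous Poisson process. One small point: since $F(T_{abs})=\sum_{p=0}^{T_{abs}-1}\Or(p)\lceil\Phi(G^{(p)})\rceil$ while the defining sum runs to $p=T_{abs}$, your inequality $F(T_{abs})\ge 2n$ is off by one term and strictly yields $\tau_{n-1}\le T_{abs}+1$; the paper's proof carries the same slack.
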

\begin{rem}
If every $G^{(t)}\in \G$ is  connected, then it is  absolutely $\frac{1}{n-1}$-diligent and by the above theorem, the algorithm spreads  rumor in $\Oh(n^2)$ time.  
\end{rem}
 In the following theorem, we construct a family of dynamic networks for which the spread time of the algorithm matches the upper bound, obtained in the above theorem, up to a constant factor. 
  \begin{theorem}\label{thm:n2tight}
  	For every $10/n\le \rho\le 1$,	there exists  $n$-node dynamic network  $\overline{\G}(\rho,n)=\{G^{(t)}\}_{t=0}^\infty$ such that (1)   every  $G^{(t)}\in \G$ is absolutely $\Theta(\rho)$-diligent, (2)   
  	for every $G^{(t)}\in\overline{\G}(n,\rho)$, $\Phi(G^{(t)})>0$ (i.e. $G^{(t)}$ is connected). Suppose that  a node of $G^{(0)}$ is aware of a rumor. Then, the   algorithm propagates the rumor through $\overline{\G}(n, \rho)$ in at least $\Omega(n/\rho)$ time, with  probability $1-O(1/n)$.  
  \end{theorem}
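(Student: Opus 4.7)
The proof strategy is to construct an explicit dynamic network $\overline{\G}(n,\rho)$ and then show, via a coupling argument with a slowly--growing Poisson process, that the rumour spreading algorithm informs fewer than $n$ nodes by time $c\cdot n/\rho$ with probability $1-O(1/n)$. Set $k=\lceil 1/\rho\rceil$; the hypothesis $\rho\ge 10/n$ ensures $k\le n/10$, providing room for the construction.

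The plan for the construction is to place the $n$ nodes on a path $v_1-v_2-\cdots-v_n$ (with $v_1$ initially informed) and, at each integer time $t$, augment the path with a constant number of ``booster'' edges at a scheduled position $j(t)$ in order to raise the degrees of the two nodes $v_{j(t)}$ and $v_{j(t)+1}$ to exactly $k$. The schedule $j(t)$ is deterministic and will be taken to increase at a rate a small constant factor larger than the expected spreading rate $2/k$, so that the actual informed frontier tracks it from behind. The spanning path guarantees connectivity of every $G^{(t)}$, while a direct case analysis will show that only the ``bridge'' edge $\{v_{j(t)},v_{j(t)+1}\}$ realises $\Or(G^{(t)})=1/k=\Theta(\rho)$: all other edges touch a node of degree at most $3$ and contribute at least $1/3$ to the absolute diligence. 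This will establish conditions (1) and (2) of the theorem.

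For the lower bound, I will introduce the informed--prefix length $J_t=\max\{\,j:\{v_1,\dots,v_j\}\subseteq \I_t\,\}$ (so $J_0=1$) and a ``good event'' $\mathcal{E}$ on which $J_t\le j(t)$ for all $t$ in the horizon $[0,n/(3\rho)]$. On $\mathcal{E}$ the only cross edges between $\I_t$ and its complement will be the single bridge (rate $2/k$) and the single path--frontier edge (rate $O(1/k)$, because its informed endpoint coincides with $v_{j(t)}$ of degree $k$), so the instantaneous rate of informing a new node is $O(\rho)$. Consequently $J_t-1$ will be stochastically dominated on $\mathcal{E}$ by a Poisson pure--birth process of rate $O(\rho)$, and a standard Chernoff tail bound for Poisson random variables will yield $J_{n/(3\rho)}<n$ with probability $1-O(1/n)$, giving the claimed spread--time lower bound.

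The main obstacle is to show that the good event $\mathcal{E}$ occurs with probability $1-O(1/n)$. The failure mode is $J_t$ overshooting the schedule $j(t)$: in that regime the $k-2$ booster edges attached to the uninformed side of the current bridge suddenly become cross edges of rate $\Theta(1)$, which would break the coupling. I plan to control this by choosing $j(t)$ to advance at a rate a constant factor larger than $2/k$ and invoking a martingale concentration bound (e.g.\ Azuma--Hoeffding applied to the embedded pure--birth process) together with a union bound over the $O(n/\rho)$--long horizon, which gives a failure probability of $O(1/n)$ and completes the argument.
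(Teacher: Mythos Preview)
Your construction has a genuine gap, and it lies precisely where you anticipated the difficulty but mis-diagnosed it. You claim that on the good event $\mathcal{E}=\{J_t\le j(t)\ \forall t\}$ the instantaneous cross-edge rate is $O(\rho)$, because ``its informed endpoint coincides with $v_{j(t)}$ of degree $k$''. That is only true when $J_t=j(t)$. Whenever $J_t<j(t)$ --- which $\mathcal{E}$ certainly allows --- the unique cross edge is the ordinary path edge $\{v_{J_t},v_{J_t+1}\}$, both of whose endpoints have degree $\le 3$, so its rate is $\Theta(1)$, not $O(1/k)$. The pure-birth process you want to couple with therefore does not have rate $O(\rho)$ on $\mathcal{E}$, and your Azuma argument for $\Pr{\mathcal{E}}\ge 1-O(1/n)$ is built on a false premise. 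In fact the situation is worse: because the frontier races at rate $\Theta(1)$ whenever it lags, it essentially sticks to $j(t)$; each of the $\Theta(n)$ ``bridge stays'' (of deterministic length $\Theta(k)$) is then crossed with constant probability $1-\e^{-\Theta(1)}$, so $\mathcal{E}$ fails with probability $1-o(1)$, not $O(1/n)$. Once an overshoot occurs, the $k-2$ forward boosters out of $v_{j(t)+1}$ become cross edges of aggregate rate $\Theta(k)$, seeding many new fronts that then each advance at rate $\Theta(1)$ along the bare path. An oblivious schedule $j(t)$ cannot prevent this.

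The paper's proof avoids exactly this failure mode by making the construction \emph{adaptive}: at each integer time it sets $B_{t+1}=B_t\setminus \I_{t+1}$ and rebuilds $G^{(t+1)}$ as a constant-degree graph on $A_{t+1}\supseteq \I_{t+1}$, a $\Delta$-regular graph on the still-uninformed set $B_{t+1}$, and a single bridge edge whose two endpoints both have degree $\Delta+1$. Thus the bottleneck of rate $2/(\Delta+1)$ is always repositioned precisely at the informed/uninformed boundary, so no analogue of your event $\mathcal{E}$ is needed. The analysis then has two ingredients you do not have: (i) a lemma showing that after one bridge crossing only $O(1)$ nodes of the $\Delta$-regular side get informed in the remaining fraction of a unit interval (because every node there has rate-$2/\Delta$ edges only); and (ii) a sum of $\Theta(n)$ independent exponential bridge-crossing times of mean $\Theta(\Delta)$, concentrated via an exponential-moment bound. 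The adaptivity is what makes (i) and (ii) possible; a deterministic schedule cannot substitute for it.
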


Combining Theorems \ref{thm:acc} and \ref{thm:n2} gives the following Corollary.
\begin{cor}
Suppose $\G=\{G^{(t)}\}_{t=0}^\infty$	is an $n$-node dynamic network   Also, let a node of  $G^{(0)}$ knows a rumor. Then for some arbitrary constant $c>1$, the   spread time of the algorithm in $\G$ is bounded by $
\min\{T(\G,c), T_{abs}(\G)\}$.
  \end{cor}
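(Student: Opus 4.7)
The plan is to observe that this corollary is essentially a direct union-bound combination of the two upper-bound theorems already established, so no fresh probabilistic machinery is needed. First I would fix the arbitrary constant $c>1$ given in the statement and invoke Theorem~\ref{thm:acc} with this $c$: this yields that the rumor reaches every node within time $T(\G,c)$ with probability at least $1-n^{-c}$. Independently (in the sense of using the same underlying process, not independent randomness), Theorem~\ref{thm:n2} guarantees that the rumor reaches every node within time $T_{abs}(\G)$ with high probability, say at least $1-n^{-c'}$ for some constant $c'>1$ coming from that theorem's ``with high probability'' guarantee.

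Next I would take the intersection of the two events ``spread time $\le T(\G,c)$'' and ``spread time $\le T_{abs}(\G)$''. By the elementary union bound, this intersection fails with probability at most $n^{-c}+n^{-c'}$, which is still of the form $n^{-c''}$ for some constant $c''>1$ (after possibly shrinking $c''$ slightly). On this intersection event, the spread time is simultaneously bounded by both $T(\G,c)$ and $T_{abs}(\G)$, hence by their minimum, which is exactly the claimed bound.

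There is essentially no obstacle here since both ingredient theorems are stated with matching high-probability guarantees for the \emph{same} underlying random process (the asynchronous rumor spreading on the same dynamic network $\G$ starting from the same informed node). I would only need to be mildly careful that the constant $c$ in the corollary's statement matches the constant fed into Theorem~\ref{thm:acc}, while Theorem~\ref{thm:n2}'s ``with high probability'' is absorbed into the final constant; a one-sentence remark suffices. Since the corollary is not quantitative about the exact probability, no further tightening is required, and the proof can be kept to a few lines.
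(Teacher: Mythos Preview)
Your proposal is correct and matches the paper's approach: the paper gives no explicit proof and simply states that the corollary follows by ``combining Theorems~\ref{thm:acc} and~\ref{thm:n2}'', which is precisely the union-bound argument you outline. Your additional care about the constants is fine but not strictly necessary given how loosely the paper phrases the high-probability guarantee.
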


\begin{figure}[t]
	\vspace{-.5cm}
	\includegraphics[width=0.6\textwidth]{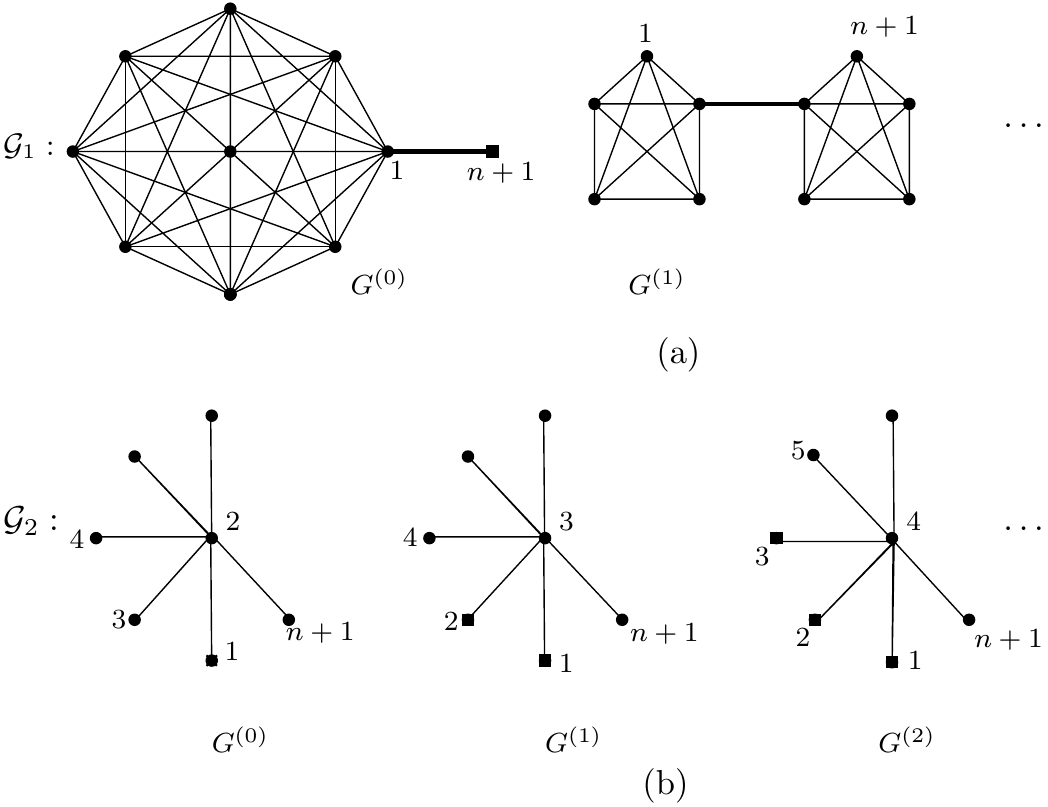}
	\vspace{0cm}
	\caption{\small{\small  $G^{0}\in \G_1$ is an $n$-node clique with a pendent edge $\{1,n+1\}$ and node $n+1$ knows a rumor. 
			$G^{(1)}\in \G_1$ consists of two equally-sized clique joint by an edge, where the left and right clique have nodes $1$ and $n+1$, respectively. Moreover,  for every $t\in \mathbb{N}$,  $G^{(t)}=G^{(1)}$.  $G^{(0)}\in \G_2$ is a star over $n+1$ nodes and leaf-node $1$ knows a rumor. Each $G^{(t)}\in \G_2$, $t\in \NN$, is a star whose center is set be  a non-informed node. If there is no any non-informed node, the center is chosen arbitrarily.
			The square and round  nodes indicate the informed and uninformed nodes, respectively.
	}}\label{Fig1}
\end{figure}
Recently, Giakkoupis et al. \cite{GNW16} applied coupling techniques and  established an  interesting relation between synchronous and asynchronous rumor spreading algorithms.
Let $G$ be a given static network with $n$ nodes and assume that $T_s(G)$ and $T_a(G)$ are the spread time of  synchronous and the standard asynchronous rumor spreading algorithms  on $G$, respectively. They   showed  that
$T_a(G)=\Oh(T_s(G)+\log n)$. Moreover, they  derived an upper bound for  $\frac{T_s(G)}{T_a(G)}$, which is  $n^{1/2}(\log n)^{\Oh(1)}$. However, we will show  such a relation does not exist for rumor spreading algorithms in dynamic networks. Here, we   demonstrate  striking dichotomies between two synchronous and   asynchronous  algorithms in two dynamic networks, namely $\G_1$ and $\G_2$, presented in  Figure \ref{Fig1} (a) and (b).

\begin{theorem}\label{thm:asynch-star}
	Suppose that $T_a(.)$ and $T_s(.)$ denote the spread time of the synchronous and  asynchronous algorithms. (i) $T_a(\G_1)=\Omega(n)$ and $T_s(\G_1)=\Theta(\log n)$, (ii) $T_a(\G_2)=\Theta(\log n)$ and $T_s(\G_2)= n$, and (iii) the  algorithm propagates the rumor in the dynamic star (i.e., $\G_2$) in $2k$ time with probability at least $1-\e^{-k/2-o(1)}-\e^{-k-o(1)}$.  
\end{theorem}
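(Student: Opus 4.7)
The three parts are essentially independent; I plan to prove (i) and (ii) with short direct arguments and devote the main effort to (iii).

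\textbf{Part (i).} For $T_a(\G_1)$, I would condition on the event that node $1$ is still uninformed at time $1$, which has probability $\e^{-1-1/n}\ge\e^{-2}$ because the only rumor-carrying edge in $[0,1)$ is the pendant $\{1,n+1\}$, whose total rate is $1+1/n$. On that event, at time $1$ the only informed vertex is $n+1$, lying in the right clique $R$ of $G^{(1)}$, so the rumor can reach the left clique $L$ only by crossing the bridge $\{a,b\}$; since $\deg(a),\deg(b)=\Theta(n)$, the bridge carries a Poisson process of rate $\Theta(1/n)$ and is unused up to time $cn$ with probability $\e^{-\Theta(c)}$, giving $T_a(\G_1)=\Omega(n)$. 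For $T_s(\G_1)=\Theta(\log n)$, synchronous round~$0$ on $G^{(0)}$ deterministically pushes the rumor from $n+1$ to $1$, so both cliques of $G^{(1)}$ contain an informed seed from round~$1$ onwards and the classical $\Theta(\log n)$ bound for synchronous push-pull on a clique applies in parallel to each side.

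\textbf{Part (ii).} The asynchronous upper bound $T_a(\G_2)=O(\log n)$ follows from (iii) by taking $k=(c+1)\log n$; the matching lower bound $\Omega(\log n)$ is trivial since any single uninformed leaf has rate at most $1+1/n$ of becoming informed per unit time. The synchronous claim $T_s(\G_2)=n$ is in fact deterministic: in every round on $G^{(t)}$, every informed vertex is a leaf whose only neighbour is $c_t$, which is non-informed by construction, so the push-pull step adds exactly $c_t$ to the informed set; inducting gives $|\I_t|=t+1$, whence $T_s(\G_2)=n$ exactly.

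\textbf{Part (iii).} Write $k_t:=|\I_t|$ at integer $t$ and $\alpha_t:=k_t(1+1/n)$. By order statistics of the exponential clocks on the edges incident to $c_t$, the first time $\tau_t$ at which $c_t$ becomes informed is $\mathrm{Exp}(\alpha_t)$-distributed; once $c_t$ is informed, each uninformed leaf $\ell$ stays uninformed for the rest of $[t,t+1)$ only if its own rate-$1$ clock does not fire. I split $[0,2k]$ into two consecutive windows of length $k$ and isolate the two failure modes that correspond to the two exponential terms. In the first window, the only failure I track is ``no new vertex is informed in $[0,k)$'', which forces $\tau_t\ge 1$ for every $t<k$ while $k_t=1$: an event of probability exactly $\e^{-k(1+1/n)}=\e^{-k-o(1)}$. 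On the complementary event, at some $t^{\star}<k$ one has $k_{t^{\star}+1}\ge 2$, so $k_t\ge 2$ throughout the second window. In the second window, for each uninformed leaf $\ell$ and each unit subinterval I split on $\{\tau_t\le 1/2\}$; the complementary event has probability $\e^{-\alpha_t/2}\le\e^{-1}$, and conditional on $\tau_t\le 1/2$ the leaf survives only if its own rate-$1$ clock does not fire in the residual window of length $\ge 1/2$, which has probability at most $\e^{-1/2}$. Multiplying the per-interval survival bound $\e^{-\alpha_t/2}+\e^{-1/2}$ across the $\ge k$ intervals and union-bounding over at most $n$ uninformed leaves yields total failure $n\cdot(\e^{-1/2}+o(1))^k=\e^{-k/2-o(1)}$ provided $\log n=o(k)$, which is precisely the regime in which the theorem's bound is non-trivial.

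\textbf{Main obstacle.} The principal technical difficulty is matching the constants $1$ and $1/2$ in the two exponents. The $\e^{-k-o(1)}$ term requires the ``all $k_t=1$'' event to be analysed exactly; a Chernoff-type tail on the number of successful unit intervals would degrade the constant from $1$. The $\e^{-k/2-o(1)}$ term requires the worst-case split at $\tau_t\approx 1/2$, and any different split breaks the constant. A secondary subtlety is the near-independence of the leaves' survival events across unit intervals: within a single interval the leaves share $\tau_t$ and the centre's clock, but the randomness regenerates at each integer time when the graph changes, and a coupling of each leaf's ``own clock fires in $(t+\tau_t,t+1)$'' event to an independent rate-$1$ exponential on the sub-window $[t+\tau_t,t+1]$ makes the product bound rigorous.
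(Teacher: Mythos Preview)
Your arguments for Parts (i) and (ii) are essentially fine and parallel the paper's, except that the paper obtains $T_a(\G_2)=O(\log n)$ directly from Theorem~\ref{thm:acc} (the star is $1$-diligent with constant conductance) rather than from Part~(iii). The real gap is in Part~(iii).

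In your second window you bound the per-interval survival probability of a fixed uninformed leaf by $\e^{-\alpha_t/2}+\e^{-1/2}$ and then write this as $\e^{-1/2}+o(1)$. That requires $\e^{-\alpha_t/2}=o(1)$, i.e.\ $\alpha_t\to\infty$; but your first window only guarantees $k_t\ge 2$, hence $\alpha_t\ge 2(1+1/n)$ and $\e^{-\alpha_t/2}\le\e^{-1}\approx 0.37$, a fixed constant. With $\alpha_t=2$ your survival bound is $\e^{-1}+\e^{-1/2}\approx 0.975$, and $(0.975)^k$ is nowhere near $\e^{-k/2}$. No split point $s\in(0,1)$ rescues this: $\e^{-2s}+\e^{-(1-s)}$ stays above $0.97$ for all $s$. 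The milestone ``a second vertex gets informed'' is simply too weak to drive the second phase at the required rate. (This also undercuts your derivation of Part~(ii) from Part~(iii): you invoke~(iii) at $k=\Theta(\log n)$ while simultaneously requiring $\log n=o(k)$.)

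The paper's decomposition is the reverse of yours and is engineered to avoid exactly this. Its Phase~1 waits not merely for a second informed vertex but for a round in which the single informed leaf pushes to the centre within the first $c$-fraction of the unit interval (probability $1-\e^{-c}$ per round, geometric tail $\e^{-ck}$; with $c=2/3$ this yields the $\e^{-k/2}$ term). Conditional on that early push, a Chernoff bound shows that a constant fraction of all $n$ leaves pull from the centre during the remaining $1-c$ fraction of that same interval, so Phase~1 ends with $\Omega(n)$ informed vertices rather than $\ge 2$. Phase~2 then enjoys $\alpha_t=\Omega(n)$: the centre is informed within time $1/\log n$ with probability $1-n^{-\omega(1)}$, each remaining leaf has essentially a full unit window to pull, and the per-round survival is $\e^{-1}+o(1)$, producing the $\e^{-k}$ term. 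Thus the two exponents are attached to the opposite phases from your attempt: $\e^{-k/2}$ is the bootstrap cost and $\e^{-k}$ is the cleanup cost.
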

  The dichotomies  shows that  we cannot generally estimate the spread time of the  asynchronous algorithm in terms of the synchronous one or vice versa in dynamic networks.

\subsection{Related works}

  In a closely related work, Giakkoupis et al. \cite{GSS14} studied the spread time of  the synchronous push-pull algorithm  in  dynamic evolving  network $\G=\{G^{(t)}\}_{t=0}^\infty$ with a vertex set of size $n$. 
   Here, the authors show that, with high probability,  the spread time of the algorithm is bounded by  $\min\{t: \sum_{p=0}^t\Phi(G^{(p)})=\Omega(M(\G)\log n)\}$, where $M(\G)=\max_{u}\Delta_{u}/\delta_u$, $\Delta_u$ and $\delta_u$ are the upper and lower  bounds for degree of node $u$ over all time steps $t$  and the maximum is  taken over all nodes.
  {Additionally,  they presented some dynamic networks for which the upper bound is tight. However, if we consider a dynamic network $\G=\{G^{(t)}\}$, where $G^{(t)}$ is $d(t)$-regular graph with $d(t)\in \{3, n-1\}$, and every other graph is a complete graph. Then, the upper bound will be of order $\Oh(n\log n)$, which is $\Oh(n)$ factor above the real spread time. 
  		They obtained the upper bound based on the   growth rate analysis of $\vol{\I_t}$ over time and hence it causes to have $M(\G)$ factor in the upper bound. In our work we introduce the notion of graph diligence in order to directly keep track of growth rate of $\I_t$ and establish a relatively more accurate  bound.  }

   
      Apart from this work, the information dissemination has been studied on evolving network models, where the evolution in the model is governed by a random process.  Clementi et al. \cite{CCDFIPPS13} considered the push algorithm on  edge-Markovian evolving graph model. In this model, given positive numbers $p, q, \in (0, 1)$ and an initial graph,   a non-edge appears with probability $p$ and an edge dies with probability $q$. They proved that when $p=\Omega(1/n)$ and $q$ is constant, the push algorithm propagates the rumor in $\Oh(\log n)$ rounds, w.h.p. 
  {\it Flooding} is a simple variant of the rumor spreading algorithms in which in every time step an informed node sends the rumor to all its neighbors. In    \cite{CST12,CMPS11,BCF09}, 
 they have studied the spread time of flooding in an ergodic Markovian dynamic graph process, i.e. when the network topology at time $t$ only depends on the topology at time $t-1$.
 In \cite{PPPU11,LLMSW12},   authors considered a dynamic evolving network, whose nodes are mobile agents performing random walks on a $2$-dimensional grid and a piece of  information is transmitted from one agent to another when they are sufficiently close to each other (and at least one of them has the information). They obtained an upper bound for the spread time of the process  in terms of number of agents and the size of the grid.

 \paragraph{Outline}{In Section \ref{sec:not}, we provide some notations and preliminaries. In Section \ref{sec:analysis} we state   our upper bound for the spread time of  algorithm in terms of conductance and diligence of networks (i.e., Theorem \ref{thm:acc}).
 In Section \ref{sec:tight} we prove Theorem \ref{thm:rho}. In Section \ref{sec:abs}, we prove Theorems \ref{thm:n2} and \ref{thm:n2tight}. In Section \ref{asyvs}, we prove Theorem \ref{thm:asynch-star}.

\section{Notations and Preliminaries }\label{sec:not}
In this section we first define notations and some useful preliminaries. 
 We use $\G=\{G^{(t)}=(V,E_t)\}_{t=0}^\infty$ to denote a sequence of simple graphs, where all $G^{(t)}$'s have the same set of nodes of size $n$, denoted by $V$, but arbitrary set of edges. Let $\mathbf{N}$ to be the set of non-negative integers (i.e., $\mathbb{N}\cup\{0\}$).
 For every $\emptyset\neq S\subseteq V$, $E_t(S, \overline{S})$ denotes the set of edges crossing set $S$ and its complement, $\overline{S}$, in $G^{(t)}$. Also, for each $t\in \NN$ and node $u$, $d_u(t)$ denotes the degree of $u$ in $G^{(t)}$.
 Throughout this paper, $n$ denotes the number of nodes in the dynamic network. 
 We say an event, say $\mathcal{E}_n$ holds with high probability, if $\Pr{\mathcal{E}_n}\ge 1-n^{-c}$, for every given constant $c>1$. 
 In order to analyze the algorithm we  keep track of the size of informed and non-informed nodes in at continuous time interval $[0, \infty)$.  
 We usually  use $t$ and $\tau$ to denote the discrete time steps and continuous time, respectively. For every $\tau\in[0, \infty)$, $\lfloor \tau \rfloor$ ($\lceil \tau \rceil$)  denote the biggest (smallest) before (after) $\tau$.     
 Let
$\I_\tau$ and  $\U_\tau$ denote the set of informed and uninformed nodes until time $\tau$, respectively, and their sizes are indicated by $I_\tau$ and $U_\tau$.  We have to remark that   any graph property including node degree, cut, and conductance at continues time $\tau$ refers to $G^{(\tau)}=G^{(\lfloor \tau\rfloor)}$. 

 Let us  now formally define the asynchronous algorithm in dynamic evolving networks. 
   \begin{definition}[The Asynchronous Algorithm in Dynamic Evolving Network $\G=\{G^{(t)}\}_{t=0}^{\infty}$]\label{def:AA}
  	Suppose at time $t=0$, some node of a given network, say $G^{(0)}$, is aware of a rumor. Also,
  	assume that each node  has been associated with an exponential time clock of rate $1$,  
  	Then, each node   contacts a random neighbor according to the arrival times of a Poisson process with rate $1$.  When they contact each other, they may learn the rumor, if at least one of them knows it (i.e., node sends (pushes) the rumor to the neighbor if he knows it, or the node asks (pulls) the rumor from the neighbor and he may learn it). 
  	Notice  that, at succeeding discrete time steps $t\in \NN$, the underlying network may change and an arbitrary graph is exposed.  We define the spread time as the first time when all nodes become informed with high probability.
   \end{definition}
	In order to analyze the algorithm in dynamic networks, we apply the theory of the non-homogeneous Poisson process.
  \begin{definition}[Non-homogeneous Poisson process]\label{nhpp}
  	Suppose that there is an exponential time clock whose rate, say $\lambda(\tau)$,  is a non-negative function of time $\tau\in [0, \infty)$ (i.e.,  $\lambda(\tau): [0, \infty)\rightarrow [0, \infty)$). 
  	Also let $N(\tau)$ to denote the number of ticks made by the clock in $[0, \tau]$. Then, $\{N(\tau): \tau\ge 0\}$ is called a non-homogeneous Poisson process with rate $\lambda(\tau)$.  
  \end{definition}

\begin{theorem}\cite[Chapter 2]{NPP03}\label{thm:npp}
	Suppose that  $\{N(\tau): \tau\ge 0\}$ is  a non-homogeneous Poisson process with rate $\lambda(\tau)$. Also assume that  $\lambda(\tau):[0,\infty)\rightarrow[0, \infty)$ is an  integrable function. Then, for every $0\le a\le b$, $N(b)-N(a)$ has a Poisson distribution with rate 
	\[
	\Lambda=\int_{a}^b\lambda(\tau)d\tau.
	\]
\end{theorem}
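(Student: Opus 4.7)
The plan is to prove Theorem~\ref{thm:npp} by reducing to the homogeneous (rate-$1$) Poisson process through a deterministic time change, and then invoking the classical fact that a rate-$1$ Poisson process has $\mathrm{Poisson}(\ell)$-distributed increments over any interval of length $\ell$.

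First I would set $\Lambda(\tau) = \int_0^\tau \lambda(u)\,du$, which by non-negativity of $\lambda$ is non-decreasing and absolutely continuous on $[0,\infty)$. I would then define the generalized right-continuous inverse $\Lambda^{-1}(s) = \inf\{\tau \ge 0 : \Lambda(\tau) > s\}$ and the time-changed counting process $M(s) := N(\Lambda^{-1}(s))$. The goal of this step is to verify directly from Definition~\ref{nhpp} that $M$ is a standard Poisson process of rate $1$: the independent-increments property is inherited from $N$, since disjoint $s$-intervals pull back under $\Lambda^{-1}$ to disjoint $\tau$-intervals; and for any $0 \le s_1 \le s_2$ the increment $M(s_2) - M(s_1) = N(\Lambda^{-1}(s_2)) - N(\Lambda^{-1}(s_1))$ counts ticks of $N$ over a $\tau$-interval whose $\lambda$-integral is exactly $s_2 - s_1$, giving stationary increments with the correct mean.

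Second, specializing to $s_1 = \Lambda(a)$ and $s_2 = \Lambda(b)$, I would observe that $N(b) - N(a) = M(\Lambda(b)) - M(\Lambda(a))$ almost surely; the two sides can differ only by ticks of $N$ occurring on the set $\{\tau : \lambda(\tau) = 0\}$, and the infinitesimal description of a non-homogeneous Poisson process assigns instantaneous tick probability $\lambda(\tau)\,d\tau$ at time $\tau$, so such ticks have probability zero. Applying the homogeneous Poisson result to $M$ then yields $N(b) - N(a) \sim \mathrm{Poisson}(\Lambda(b) - \Lambda(a))$, which is the stated conclusion.

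The step I expect to be the main obstacle is the treatment of intervals where $\lambda \equiv 0$: on such a region $\Lambda$ is flat and $\Lambda^{-1}$ must be defined as a generalized inverse rather than a true functional inverse, and one must argue separately that $N$ has no ticks on the level sets of $\Lambda$ almost surely. If this point is delicate to pin down rigorously, I would fall back on a discretization argument: partition $[a,b]$ into $n$ equal subintervals, use independent increments to factorize $\Ex{e^{i\theta(N(b)-N(a))}}$ as a product over subintervals, expand each factor using the infinitesimal tick probability $\lambda(\tau)h + o(h)$ to obtain $1 + (e^{i\theta}-1)\lambda(\tau_i^{(n)})\tfrac{b-a}{n} + o(1/n)$, and let $n \to \infty$ so that the logarithm of the product converges to $(e^{i\theta}-1)\Lambda$; uniqueness of characteristic functions then finishes the proof.
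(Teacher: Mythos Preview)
The paper does not supply a proof of Theorem~\ref{thm:npp}; it is quoted as a standard fact from \cite{NPP03} and immediately followed by the remark ``For more information about non-homogeneous Poisson processes we refer the interested reader to \cite{NPP03}.'' So there is no in-paper argument to compare against.

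Your proposal is a correct and standard route to this classical result. The time-change $M(s)=N(\Lambda^{-1}(s))$ reducing to a unit-rate Poisson process is exactly the textbook construction, and your identification of the one delicate point---flat stretches of $\Lambda$ where $\lambda\equiv 0$---is accurate; handling it either by arguing that $N$ almost surely has no jumps on $\{\lambda=0\}$ or by the characteristic-function discretization fallback is fine. The only caveat is that Definition~\ref{nhpp} in the paper is informal (an ``exponential time clock'' with time-varying rate) and does not explicitly list independent increments or the $\Pr{N(\tau+h)-N(\tau)=1}=\lambda(\tau)h+o(h)$ axiom; your argument tacitly imports these from the usual axiomatic definition, which is the intended reading but worth stating if you write this out in full.
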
 
   For more information about non-homogeneous Poisson processes we refer the interested reader to \cite{NPP03}. 
    We  state the following useful observation.
   \begin{lem}\label{ineq:pois}
     	Suppose that $X$ is a Poisson random variable with rate $r$. Then,
     	$
     	\Pr{X\le r/2}\le \e^{r(1/\e+1/2-1)}$.
     \end{lem}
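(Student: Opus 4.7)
The plan is to apply the standard Chernoff method for lower tails, specialized to the Poisson distribution, and then optimize (or simply choose a convenient value of) the free parameter. The specific form of the exponent $r(1/\e + 1/2 - 1)$ strongly suggests that the ``right'' parameter choice is $t=1$ rather than solving the optimization; this will give the bound immediately without any calculus.

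Concretely, first I would recall that if $X \sim \mathrm{Poisson}(r)$, its moment generating function is
\[
\Ex{\e^{-tX}} = \e^{r(\e^{-t}-1)} \qquad \text{for all } t \geq 0,
\]
which follows by summing the series $\sum_{k \ge 0} \e^{-tk} \cdot \e^{-r} r^k / k! = \e^{-r} \sum_{k \ge 0} (r \e^{-t})^k/k! = \e^{-r}\e^{r\e^{-t}}$. Next, for any $t > 0$, Markov's inequality applied to the nonnegative random variable $\e^{-tX}$ yields
\[
\Pr{X \le r/2} \;=\; \Pr{\e^{-tX} \ge \e^{-tr/2}} \;\le\; \e^{tr/2}\cdot \Ex{\e^{-tX}} \;=\; \exp\!\left(r\bigl(\e^{-t}-1 + t/2\bigr)\right).
\]
Setting $t = 1$ collapses the exponent to $r(1/\e + 1/2 - 1)$, giving exactly the claimed inequality. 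Note that $1/\e + 1/2 - 1 = -c_0 < 0$ with the constant $c_0$ used earlier in Theorem \ref{thm:acc}, which is presumably the reason the lemma is formulated with this particular constant rather than with the truly optimal $t^* = \log 2$ (which would give $\e^{-r(\log 2)/2 + r/2 \cdot 0}$, a slightly better but less convenient constant).

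There is essentially no obstacle here beyond remembering the Poisson MGF; the only small choice to justify is that $t=1$ is admissible (any positive $t$ works in Markov's step), and that the resulting exponent is indeed negative so the bound is nontrivial. I would present the computation inline in two or three lines without further commentary.
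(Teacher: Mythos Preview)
Your argument is correct: the Chernoff bound with the MGF $\Ex{\e^{-tX}}=\e^{r(\e^{-t}-1)}$ and the choice $t=1$ lands exactly on the stated exponent. This is, however, not the route the paper takes. The paper works directly with the PMF: it writes $\Pr{X\le r/2}=\e^{-r}\sum_{j\le r/2} 2^j(r/2)^j/j!$, bounds $2^j\le 2^{r/2}$ for $j\le r/2$, and then bounds the remaining partial sum by $\e^{r/2}$, obtaining first $\e^{r((\log 2)/2-1/2)}$ and only afterwards relaxing via $(\log 2)/2<1/\e$ to the stated constant. In effect the paper's hands-on estimate recovers the \emph{optimal} Chernoff exponent (your $t^\ast=\log 2$) and then throws part of it away, whereas your choice $t=1$ hits the final constant in one step. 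Your approach is shorter and more transparent; the paper's computation is self-contained in that it avoids invoking the MGF or Markov's inequality. One small remark: your parenthetical expression for the bound at $t^\ast=\log 2$ is garbled (the correct value is $\e^{r((\log 2)/2-1/2)}$), but this does not affect the proof itself.
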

  \begin{proof}
 	\begin{align*}
 		&\Pr{X\le r/2} 
 		=\e^{-r}\sum_{j=0}^{r/2}\frac{2^j(r/2)^j}{j!}
 		\le \e^{-r}\sum_{j=0}^{r/2}\frac{2^{r/2}(r/2)^j}{j!}\\
 		&=\e^{-r}\e^{\frac{r\log2}{2}}\sum_{j=0}^{r/2}\frac{(r/2)^j}{j!}
 		\le\e^{r(\frac{\log(2)}{2}-1)}\e^{\frac{r}{2}}
 		=\e^{r(\frac{\log2}{2}-1+\frac{1}{2})}
 		\le \e^{r(1/\e+1/2-1)},
 	\end{align*}
 	which follows from $(\log 2)/2<1/\e$.
 \end{proof}

     \section{  The Spread Time in  Dynamic Networks }\label{sec:analysis}
In this section we prove Theorem \ref{thm:acc}.
Let us first recall that for every $G^{(t)}\in \G$ and $S\subset V$, with $0<\vol{S}\le \vol{G^{(t)}}/2$, 
$
\rho(S)=\min_{\{u,v\}\in E(S, \overline{S})}\max\{\bar{d}(S)/d_u(t),\bar{d}(S)/d_v(t)\}$, where $\bar{d}(S)=\vol{S}/|S|=(\sum_{u\in S}d_u)/|S|$. Also,

\begin{align}\label{dilig}
{\rho}(G^{(t)})=\min_{\substack{S\subset V\\0<\vol{S}\le \vol{G}/2}}\rho(S).
\end{align}
Also we  assumed that $\rho(G)=0$, if $G$ is not connected.
For a dynamic network $\G=\{G^{(t)}\}_{t=0}^\infty$, we use  $\rho(t)$ to denote $\rho(G^{(t)})$ and  $G^{(t)}$ is called a $\rho$-diligent graph, if $\rho(t)\ge \rho$.
 The following lemma plays a key role in the proof of Theorem \ref{thm:acc}.
\begin{lem}\label{lem:key}
  Suppose that $\G=\{G^{(t)}\}_{t=0}^\infty$ be a dynamic network  and initially a rumor is injected to some node of $G^{(0)}$. Suppose that we are at some   arbitrary time $\tau\in[0, \infty)$ and there are $I_\tau$ informed and $U_\tau$  uninformed  nodes.
   Define  
 \[
 \Delta(\alpha)=\min\left\{q: \sum_{p=0}^q\Phi(G^{( \lceil\tau\rceil+p)})\cdot{\rho(\lceil\tau\rceil+p)}~\ge~ {2\alpha}\right\}.\]
  Let $\tau'$ denotes the earliest time when the number of informed nodes increases by $\min\{I_\tau,U_\tau\}/2$.
 Then, 
\[
\Pr{\tau'-\tau\ge \Delta(\alpha)+2}\le \e^{-c_0\alpha\min\{I_\tau, U_\tau\}},
\]
 where  $c_0=1-1/2-1/\e$.
\end{lem}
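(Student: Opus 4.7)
The plan is to dominate the true ``new-informed event'' counting process by an auxiliary non-homogeneous Poisson process of deterministic rate, and then conclude with Lemma~\ref{ineq:pois}. Set $M:=\min\{I_\tau,U_\tau\}/2$ and let $N(t)$ count new-informed events in the window $[\tau,\tau+t]$. First I would verify that, as long as strictly fewer than $M$ new nodes have been informed after $\tau$, the sizes satisfy $I_s\in[I_\tau,I_\tau+M]$ and $U_s\in[U_\tau-M,U_\tau]$, so in particular $\min\{I_s,U_s\}\ge M$ for every $s\in[\tau,\tau')$; this follows by a short case split on whether $I_\tau\le U_\tau$ or $U_\tau<I_\tau$. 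Combined with inequality~(\ref{intu2}), the instantaneous rate of new-informed events satisfies
\[
\lambda(s)\;\ge\;\Phi(G^{(s)})\cdot\rho(s)\cdot M\;=:\;\lambda'(s)\qquad\text{for all }s\in[\tau,\tau').
\]
A standard thinning coupling then constructs, on an enlarged probability space, a non-homogeneous Poisson process $\widetilde N$ of deterministic rate $\lambda'$ with $\widetilde N(t)\le N(t)$ for all $t\le\tau'-\tau$, extended independently on $[\tau',\infty)$ so that $\widetilde N$ remains a rate-$\lambda'$ non-homogeneous Poisson process on all of $[\tau,\infty)$. Consequently $\{\tau'-\tau\ge\Delta(\alpha)+2\}\subseteq\{\widetilde N(\Delta(\alpha)+2)<M\}$, and it suffices to bound the latter event.

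By Theorem~\ref{thm:npp}, the random variable $\widetilde N(\Delta(\alpha)+2)$ is Poisson with mean
\[
\Lambda\;=\;M\cdot\int_{\tau}^{\tau+\Delta(\alpha)+2}\Phi(G^{(s)})\cdot\rho(s)\,ds.
\]
Since $\Phi(G^{(s)})\rho(s)$ is piecewise constant on unit intervals $[t,t+1)$, and since $\lceil\tau\rceil\le\tau+1$ implies that the integer window $[\lceil\tau\rceil,\lceil\tau\rceil+\Delta(\alpha)+1]$ lies inside $[\tau,\tau+\Delta(\alpha)+2]$, one has
\[
\int_{\tau}^{\tau+\Delta(\alpha)+2}\Phi(G^{(s)})\rho(s)\,ds\;\ge\;\sum_{p=0}^{\Delta(\alpha)}\Phi(G^{(\lceil\tau\rceil+p)})\cdot\rho(\lceil\tau\rceil+p)\;\ge\;2\alpha
\]
by the very definition of $\Delta(\alpha)$, and hence $\Lambda\ge 2\alpha M$.

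In the regime of interest $\alpha\ge 1$ (the only one in which the claimed bound is nontrivial, and precisely the regime in which the lemma is invoked in the proof of Theorem~\ref{thm:acc}) we have $M\le\Lambda/2$, so Lemma~\ref{ineq:pois} applied with $r=\Lambda$ yields
\[
\Pr{\widetilde N(\Delta(\alpha)+2)\le M}\;\le\;\Pr{\widetilde N(\Delta(\alpha)+2)\le\Lambda/2}\;\le\;e^{-c_0\Lambda}\;\le\;e^{-2c_0\alpha M}\;=\;e^{-c_0\alpha\min\{I_\tau,U_\tau\}},
\]
exactly as required. The main technical subtlety is justifying the thinning coupling cleanly: the actual rate $\lambda(s)$ is a random, state-dependent quantity whereas $\lambda'(s)$ is deterministic, and the pointwise domination $\lambda(s)\ge\lambda'(s)$ holds only on the stochastic interval $[\tau,\tau')$. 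One is thus forced to run the thinning on $[\tau,\tau']$ and extend $\widetilde N$ independently on $[\tau',\infty)$; this is harmless because the event being bounded depends only on $\widetilde N(\Delta(\alpha)+2)$.
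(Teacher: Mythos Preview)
Your proof is correct and follows essentially the same route as the paper: bound the instantaneous rate below by $\Phi(G^{(s)})\rho(s)\cdot M$ on $[\tau,\tau')$, integrate over the window $[\tau,\tau+\Delta(\alpha)+2]\supseteq[\lceil\tau\rceil,\lceil\tau\rceil+\Delta(\alpha)+1]$ to get Poisson mean at least $2\alpha M$, and finish with Lemma~\ref{ineq:pois}. The paper is looser at precisely the two places where you are careful --- it simply declares $\{N(\gamma)\}$ to be a non-homogeneous Poisson process with (random) rate $\lambda(\gamma)$ and asserts stochastic domination without spelling out the thinning, and it silently uses $m(\tau)/2\le r/2$ (i.e.\ $\alpha\ge1$) in its final chain of inequalities --- so your version is in fact a cleaner write-up of the same argument.
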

 \begin{proof}
 For $\gamma\in[\tau, \tau')$, let $E({\I_\gamma},{\U_\gamma})$ denote the set of edges crossing the set of informed nodes $\I_\gamma$ and non-informed nodes $\U_\gamma$, in $G^{(\gamma)}=G^{(\lfloor \gamma\rfloor)}$. For every  $\{u, v\}\in E({\I_\gamma},{\U_\gamma})$, informed node $u$ pushes the rumor to $v$ with rate $1/d_u(\gamma)$ and  non-informed $v$ pulls it from $u$ with Poisson rate $1/d_v(\gamma)$.
{Therefore, $u$ and $v$ contact each other with rate $
 	{1}/{d_u(\gamma)}+{1}/{d_v(\gamma)}$,
 	which follows from the fact that minimum of  independent exponential random variables is an exponential random variable whose rate is the sum of those rates.} 	 Furthermore, applying the order statistics of independent  exponentially  distributed  random variables implies that the first non-informed node after time $\gamma$ gets informed within an exponential distribution with rate  
 	 \begin{align*}
 	 \lambda(\gamma)=\sum_{\{u,v\}\in E({\I_{\gamma}},{\U_{\gamma}})} \left\{\frac{1}{d_u(\gamma)}+\frac{1}{d_v(\gamma)}\right\}
  \end{align*}
Let  $S=\I_\gamma$ if $\vol{I_\gamma}\le \vol{U_\gamma}$ and $S=\U_\gamma$, otherwise. Also note that  $\rho(S)$ is the diligence of cut $E(S, \overline{S})$ in $G^{(\gamma)}$.  Then by the definition of $\rho(S)$, we get that  

 \begin{align}\label{lambdat}
	\lambda(\gamma)\ge \rho(S)\frac{|S|}{\vol{S}}|E({\I_{\gamma}},{\U_{\gamma}})|\ge \rho(\gamma)\frac{|S|}{\vol{S}}|E({\I_{\gamma}},{\U_{\gamma}})|.
\end{align}
By the definition of graph conductance (e.g., see (\ref{conduct})) we get
\begin{align*}
	\lambda(\gamma)&\ge  \rho(\gamma)\cdot |S| \Phi(G^{(\gamma)})\min\{\vol{\I_\gamma},\vol{\U_\gamma}\}/\vol{S}\nonumber\\
	&\ge\Phi(G^{(\gamma)})\rho(\gamma) |S| \ge \Phi(G^{(\gamma)})\rho(\gamma)  \min\{I_\gamma, U_\gamma\}, 
\end{align*}
where $G^{(\gamma)}=G^{(\lfloor\gamma\rfloor)}$ and the  last inequity follows from  $|S|\ge \min\{I\gamma, U_\gamma\}$.
Let $m(\tau)=\min\{I_\tau, U_\tau\}$. By the definition of $\tau'$ in the lemma statement, for every $\gamma\in [\tau, \tau']$, the non-informed nodes decreases by a factor $1/2$ and informed may increase as well. So  
\begin{align}\label{last}
	\lambda(\gamma)\ge \Phi(G^{(\gamma)})\cdot\rho(\gamma)\cdot \min \{I_\gamma, U_\gamma\}\ge 
	\Phi(G^{(\gamma)})\cdot\rho(\gamma)\cdot m(\tau)/2
	.
\end{align}
 	 Let us define a non-homogeneous Poisson process $\{N(\gamma), \gamma\ge \tau\}$ with rate $\lambda(\gamma)$, where $N(\gamma)$ counts the number of occurrences of the non-homogeneous Poisson process in time interval $[\tau, \gamma)$. Equivalently, $N(\gamma)$  counts the number of newly informed nodes in time interval $[\tau, \gamma)$.

 Note that 
 if $\tau'-\tau\le \Delta(\alpha)+2$, then  $\Pr{\tau'-\tau\ge \Delta(\alpha)+2}=0$ and the proof is completed. So let us assume that    $\tau+\Delta(\alpha)+2\le \tau'$. 
 By the definition (i.e., Inequality (\ref{lambdat})), $\lambda(\gamma)$ is integrable. Moreover,  by Theorem \ref{thm:npp}, $N(\tau+\Delta(\alpha)+2)$ has a Poisson distribution with rate
  $\Lambda(\tau+\Delta(\alpha)+2)$, which is at least
   \begin{align*}
\Lambda(\tau+\Delta(\alpha)+2)&=\int_{\tau}^{\tau+\Delta(\alpha)+2}\lambda(\gamma)d\gamma\ge\int_{\lceil\tau\rceil}^{\lceil\tau\rceil+\Delta(\alpha)+1}\lambda(\gamma)d\gamma\\
&=\sum_{p=0}^{\Delta(\alpha)}\int_{\lceil\tau\rceil+p}^{\lceil\tau\rceil+p+1}\lambda(\gamma)d\gamma
\end{align*}
 where the first inequality holds because $[\lceil\tau\rceil, \lceil\tau\rceil+\Delta(\alpha)+1]\subset [\tau, \tau+\Delta(\alpha)+2]$.
 Applying Inequality (\ref{last}) and definition of $\Delta(\alpha)$ in the lemma statement results into
 \begin{align*}
\Lambda(\tau+\Delta(\alpha)+2)&\ge \sum_{p=0}^{\Delta(\alpha)} \left\{\Phi(G^{(\lceil\tau\rceil+p+1)})\rho(\lceil\tau\rceil+p+1) \cdot m(\tau)/2\right\}\\&
\ge 2\alpha m(\tau)/2=\alpha \cdot m(\tau)
 \end{align*}
 
 Therefore, $N(\tau+\Delta(\alpha)+2)$ 
 stochastically dominates a Poisson distribution, say $X$, with rate
$r=\alpha\cdot m(\tau)$. By the definition of $\tau'$ and Lemma \ref{ineq:pois}, we get
\begin{align*}
&\Pr{\tau'-\tau\ge \Delta(\alpha)+2}\le \Pr{ N(\tau+ \Delta(\alpha)+2)
	\le m(\tau)/2}\\&\le\Pr{ X\le m(\tau)/2} \le \Pr{X\le r/2}\le \e^{-c_0\alpha \cdot m(\tau)}, 
\end{align*}
 where $c_0=1-1/2-1/\e$ . 
This shows that, with probability at least  $1-\e^{-c_0\alpha m(\tau)}$, during time interval $[\tau, \tau+\Delta(\alpha)+2)$, the number of informed nodes increases by at least $m(\tau)/2$. 
   	  	\end{proof}
     	\begin{proof}[Proof of Theorem \ref{thm:acc}]
     	The proof is based on iterative applications of Lemma \ref{lem:key}. 
     We analyze the spread time of the algorithm in two consecutive phases.
     	
     	\underline{ First Phase:}
     	This phase starts with $I_0=1$ and ends when the number of informed node  exceeds $n/2$.
     	 Let $\tau_f$ denotes the time when the phase ends.
     	For each integer $0\le i\le \log_{3/2}(n/2)$,  consider sub-interval $[\tau_i, \tau_{i+1}]\subset [0, \tau_f]$ where  during each one, the number of informed nodes increases by additive factor $I_{\tau_i}/2$ and inductively, by the end of the $i$-th interval there are $(3/2)^{i}$ informed nodes.  For each
     $0\le i\le \log_{3/2}(n/2)$, let us define $\alpha_i=\lceil\frac{c\log n}{c_0(3/2)^i}\rceil$. Moreover, 
      for every $\tau\in[0,\tau_f]$, $I_\tau\le n/2\le U_\tau$ and hence $\min\{I_\tau, U_\tau\}=I_\tau$.
      Then, by Lemma \ref{lem:key}, with probability $	1-\e^{-c_0\alpha_i(3/2)^{i}}\ge 1-\e^{-c\log n}$,  $\Delta(\alpha_i)+2$ is an upper bound for $\tau_{i+1}-\tau_{i}$. The union bound over all $0\le i\le \log_{3/2}(n/2)$ implies that with probability  at least $1-(4\log n)\e^{-c\log n}$, we get
     \begin{align}\label{upper:phase1}
     	\tau_f&\le \sum_{i=0}^{\log_{3/2} n/2}\left\{\Delta(\alpha_i)+2\right\}\nonumber\\
     	&\le  5\log n+\min \left\{t:\sum_{p=0}^{t}\Phi(G^{(p)})\cdot\rho(p)\ge {2}\sum_{i=0}^{\log_{3/2} n/2}\alpha_i\right\},
     \end{align}  
     which follows from $\log_{3/2} n/2\le (2\cdot5)\log n$.
     
     \underline{Second Phase:} This phase starts with $ n/2$ informed nodes (or at most $n/2$ non-informed nodes) and ends when there is no any non-informed nodes. Let $\tau_s$ denotes the time when the phase ends. Similar to the first phase, for each integer $1\le j\le \log_2n $, consider sub-interval $[\tau_j,\tau_{j+1}]\subset(\tau_f,\tau_s]$ where during each $[\tau_j,\tau_{j+1}]$ the number of uninformed nodes decrease by $U_{\tau_j}/2$ and inductively,  by the end of the $j$-th interval, the number of non-informed nodes reduces to $n(1/2)^{j+1}$. For each $0\le j\le \log_2 n$, define $\beta_j=\lceil\frac{c\log n}{c_0 n(1/2)^j}\rceil$. In this phase for every $\tau\in (\tau_f,\tau_s]$,  $ U_\tau\le n/2\le I_\tau$ as $U_\tau$ is a non-increasing in terms of $\tau$ and $I_\tau$ is non-decreasing in $\tau$. So we have $\min\{I_\tau, U_\tau\}=U_\tau$.
     Therefore, applying Lemma \ref{lem:key} implies that, with probability $1-\e^{c_0\beta_jn(1/2)^j}\ge 1-\e^{-c\log n}$, we have that   
     $\tau_{j+1}-\tau_j\le \Delta(\beta_j)+2$.
     The union bound over all $1\le j\le \log_{2}(n)$ implies that with probability  at least $1-(\log_2 n)\e^{-c\log n}$, we get
     
     \begin{align}\label{upper:phase2}
     	\tau_s-\tau_f&\le \sum_{j=1}^{\log_{2}n}\left\{\Delta(\beta_j)+2\right\}\nonumber\\
     	&\le  5\log n+\min \left\{t:\sum_{p=0}^{t}\Phi(G^{(p)})\cdot \rho(p)\ge {2}\sum_{j=1}^{\log_{2} n}\beta_j\right\},
     \end{align}
     which follows from $\log_2 n\le 2\log n$.
     The geometric series calculation shows that  
     \[
    {2}\left(\sum_{i=0}^{\log_{3/2}(n/2)}\alpha_i+
     \sum_{j=1}^{\log_{2}(n)}\beta_j\right)\le {(10c+10)\log n}/{c_0}.
     \]
    Using the above upper bound and combining the upper bounds in Inequalities (\ref{upper:phase1}) and (\ref{upper:phase2}) result that with probability $1-n^{-c+o(1)}$, we have 
    
    \begin{align*}
    \tau_s&=\tau_f+(\tau_s-\tau_f)\\
    &\le 10\log n+ \min \left\{t:\sum_{p=0}^{t}\Phi(G^{(p)})\cdot \rho(p)\ge {(10c+10)\log n}/{c_0}\right\}\\&\le 
     \min \left\{t:\sum_{p=0}^{t}\Phi(G^{(p)})\rho(p)\ge {(10c+20)\log n}/{c_0}\right\},
    \end{align*}  
 where $\tau_s$ is an upper bound for the spread time and completes the proof.
     \end{proof}

\section{The Upper Bound Is Almost Tight}\label{sec:tight}
In this section we construct dynamic networks for which the spread time of the algorithm almost matches the upper bound  obtained in Theorem \ref{thm:acc}. We apply them to prove Theorem \ref{thm:rho}. 

Let $V$ denote a set of $n$ nodes and $A\subset V$ be an arbitrary subset, where $n/4\le |A|\le 3n/4$. Also, let $B= V\setminus A$.  Fix arbitrary integer numbers $1\le k=k(n)=\Oh(\log n/\log\log n)$ and $1\le \Delta=\Delta(n)=\Oh(\sqrt{n})$.
We  build graph $H_{k,\Delta}(A,B)$ with vertex set $A\cup B$ in the following two  steps as follows;
\begin{enumerate}
	\item Let $\{S_i, 0\le i\le k\}$ denote a set of disjoint clusters of nodes of $A\cup B$, where for each $ i=0\ldots k$, $|S_i|=\Delta$ and we have  $S_0\subset A$ and $\cup_{i=1}^k S_i\subset B$.
	For each $i=0,\ldots k$, connect each node of $S_i$ to all nodes in $S_{i+1}$. The resulting graph is a string of complete bipartite graphs having $(k+1)\cdot \Delta$ nodes and $k\cdot \Delta^2$ edges.
	\item Let $G_1=(A\setminus S_0, E_1)$ and $G_2=(B\setminus\cup_{i=1}^k S_i, E_2)$ denote two arbitrary $4$-regular expander graphs on vertex sets $A\setminus S_0$ and $B\setminus\cup_{i=1}^kS_i $, respectively. We say graph $G$ is an expander graph if $\Phi(G)=\Theta(1)$. We now  connect each node of $S_0$ to $\Delta$ distinct nodes of $G_1$ such that for every node $u\in A\setminus S_0$, degree of $u$ in $G_1$ increases  by at most an additive constant.   
	Similarly, we connect each node of $S_k$ to $\Delta$ distinct nodes of  $G_2$ so that for every node $u\in B\setminus\cup_{i=1}^kS_i$, degree of $u$ in $G_2$ increases  by at most an additive constant.
	 
\end{enumerate}
We now observe that;
\begin{obs}\label{obs:hn}
	For every graph $H_{k,\Delta}(A, B)$ with $\Delta=\Oh(\sqrt{n})$, the followings hold
	\begin{itemize}
		\item $\Phi(H_{k,\Delta}(A, B))=\Theta\left(\frac{\Delta^2}{k\Delta^2+n}\right)$, 
		\item   $\rho(H_{k,\Delta}(A,B))=\Theta(\frac{1}{\Delta})$
	\end{itemize}
\end{obs}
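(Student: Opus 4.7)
The plan is to prove each of the two claims by exhibiting a specific cut that realises the claimed value and then arguing that no other cut does strictly better. I first record the degree profile: nodes inside the clusters $S_0,\dots,S_k$ have degree $2\Delta$ (two $\Delta$-sized neighbourhoods -- into the adjacent cluster, and for $S_0$ and $S_k$ into the expander), while nodes of the expander parts $R_A:=A\setminus S_0$ and $R_B:=B\setminus\bigcup_{i=1}^k S_i$ have degree $\Theta(1)$. A quick edge count then gives $\vol{H_{k,\Delta}(A,B)}=\Theta(k\Delta^{2}+n)$, which will be the denominator of the conductance.

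For the conductance, the midpoint cut $T=A\cup S_1\cup\dots\cup S_{\lfloor k/2\rfloor}$ isolates a single bipartite layer $E(S_{\lfloor k/2\rfloor},S_{\lfloor k/2\rfloor+1})$ of exactly $\Delta^{2}$ edges, with both sides of volume $\Theta(k\Delta^{2}+n)$, giving $\Phi(H_{k,\Delta}(A,B))\leq O\bigl(\Delta^{2}/(k\Delta^{2}+n)\bigr)$. For the matching lower bound I decompose an arbitrary cut $(T,\overline{T})$ with $\vol{T}\leq\vol{H_{k,\Delta}(A,B)}/2$ and set $t_i=|T\cap S_i|$. If some layer $(S_i,S_{i+1})$ is balanced -- meaning $t_i(\Delta-t_{i+1})+(\Delta-t_i)t_{i+1}=\Omega(\Delta^{2})$ -- that single layer already supplies the required crossing edges. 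Otherwise each $t_i$ is either $O(1)$ or $\Delta-O(1)$, so either the sequence $(t_i)$ flips between these two extremes at some layer (again costing $\Omega(\Delta^{2})$ edges in that layer), or all clusters lie on the same side of the cut, in which case the constant conductance of $G_1$ or $G_2$ applied to the expander intersection $T\cap R_A$ or $T\cap R_B$, together with the $\Delta^{2}$ boundary edges from $S_0$ to $R_A$ and from $S_k$ to $R_B$, again yields $|E(T,\overline{T})|=\Omega(\Delta^{2})$ whenever $\vol{T}=\Omega(\Delta^{2})$; the remaining corner case $\vol{T}=o(\Delta^{2})$ is handled trivially by the local expansion of $G_1,G_2$ and of each $K_{\Delta,\Delta}$, which already yields conductance $\Theta(1)\gg\Delta^{2}/(k\Delta^{2}+n)$.

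For the diligence, observe that the maximum degree of $H_{k,\Delta}(A,B)$ is $2\Delta$ and every node has degree at least $1$, so for any nonempty $T$ with $\vol{T}\leq\vol{H_{k,\Delta}(A,B)}/2$ we have $\bar d(T)\geq 1$ and $\max_{\{u,v\}\in E(T,\overline{T})}\min\{d_u,d_v\}\leq 2\Delta$, which together yield
\[
\rho(T)=\frac{\bar d(T)}{\max_{\{u,v\}\in E(T,\overline{T})}\min\{d_u,d_v\}}\;\geq\;\frac{1}{2\Delta},
\]
and hence $\rho(H_{k,\Delta}(A,B))=\Omega(1/\Delta)$. For the matching upper bound take $T=A$: since $\Delta=O(\sqrt n)$ forces $\vol{A}=\Theta(n)$ and $|A|=\Theta(n)$, one has $\bar d(A)=\Theta(1)$, and every crossing edge lies in $E(S_0,S_1)$ with both endpoints of degree $2\Delta$, so $\rho(A)=\Theta(1)/(2\Delta)=\Theta(1/\Delta)$.

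The main obstacle is the conductance lower bound, which requires the balanced-versus-unbalanced case analysis sketched above and a careful treatment of cuts that confine themselves to one expander while partially absorbing the end clusters $S_0$ or $S_k$; the diligence bound, by contrast, collapses almost entirely to the fact that the maximum degree is $2\Delta$.
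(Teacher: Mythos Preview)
Your proposal is correct and follows essentially the same approach as the paper: exhibit a specific cut for each upper bound (the paper uses $A_q=A\cup\bigcup_{i\le q}S_i$ for the conductance and $A_1=A\cup S_1$ for the diligence, while you use the midpoint cut and $A$, respectively) and then argue via degree considerations and case analysis for the lower bounds. The paper's own proof is considerably terser---it dispatches the conductance lower bound with ``a few case analysis'' and the diligence lower bound with a one-line remark about the maximum degree being $2\Delta$---so your write-up actually supplies the detail the paper omits.
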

\begin{proof}
	By the construction of $H_{k,\Delta}(A, B)$, we observe that for each $q=1,...,k$ if we set  $A_q=A\cup(\cup_{i=1}^q S_i)$, then $\vol{A_q}=2(q+1)\Delta^2+\Theta(n)$, because by the construction nodes in  $A\setminus S_0$ have constant degree and nodes in $\cup_{i=0}^qS_i$ have degree $2\Delta$. Therefore,
	\[\Phi(H_{k, \Delta}(A, B))\le \frac{|E(A_q,\overline{S_q})|}{\vol{S_q}}= \frac{\Delta^2}{2(q+1)\cdot \Delta^2 +\Theta(n)}.\]
	Since graphs induced by $A\setminus S_0$ and $B\setminus (\cup_{i=1}^k S_i) $ are expander, a few case analysis implies that  
	\[
	\Phi(H_{k, \Delta}(A, B))=\Theta\left(\frac{\Delta^2}{k\cdot \Delta^2 +n}\right).
	\]
	Consider set $A_1=A\cup S_1$ and cut set $E(A_1,\overline{A_1})$. We know that $S_0\subset A_1$ and $S_1\subset A_1$ so we conclude that  $d(A_1)=(4\Delta^2+\Theta(n))/|A_1|=\Theta(1)$, as $\Delta^2=\Oh(n)$. Then, we have $\rho(A_1)=d(A_1)/{2\Delta}=\Theta(1/\Delta)$. Since the graph has constant average degree and its maximum degree is $\Theta(2\Delta)$. We have $\rho(H_{k, \Delta}(A, B))=\Theta(1/\Delta)$.
\end{proof}

\subsubsection*{\underline{ $\rho$-Diligent Dynamic Network $\G(n,{\rho})$}}
We now
 define evolving dynamic network
$\G(n,{\rho})=\{G^{(t)}=(V, E_t)\}_{t=0}^\infty$, where each $G^{(t)}\in \G$ is $\Theta(\rho)$-diligent; Let $\Delta=\lceil1/\rho\rceil$ and set
 $G^{(0)}=H_{k, \Delta}(A_0, B_0)$, where $A_0$ and $B_0$ are the arbitrary disjoint subsets of $V$ and  $|A_0|=n/4$ and $|B_0|=3n/4$. Let us assume that at time $t=0$, we inject a rumor to a  node of $A_0$. For every $t\in \NN$, define  $B_{t+1}=B_{t}\setminus \I_{t+1}$ and $A_{t+1}=V\setminus B_{t+1} $, where $\I_{t+1}$ is the set of informed nodes up to time $t+1\in \NN$.
If $n/4\le |B_{t+1}|$ and $|B_{t+1}|<|B_t|$, then $G^{(t+1)}=H_{k,  \Delta}(A_{t+1}, B_{t+1})$ and otherwise $G^{(t+1)}=G^{(t)}$.

\begin{lem}\label{lem:string1}
	For some $t\in \NN$, consider graph $G^{(t)}=H_{k, \lceil 1/\rho\rceil}(A_t, B_t) \in \G(n, {\rho})$, where $\lceil 1/\rho\rceil=\Oh(\sqrt{n})$.  Suppose that, at beginning of time $t$,
	at least a node of $A_t$ is aware of a rumor but no node in $B_t$ knows the rumor. The  algorithm starts propagating  the rumor trough $H_{k, \lceil 1/\rho\rceil}(A_t, B_t)$. Then, the probability that at time $t+1$, at least a node of $S_k$ gets informed is at most $(2^k/k!)\cdot\lceil 1/\rho\rceil$.
\end{lem}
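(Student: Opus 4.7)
The plan is to obtain the bound by a path-counting union bound over the bipartite chain $S_0-S_1-\cdots-S_k$. Writing $\Delta=\lceil 1/\rho\rceil$, I would first enlarge the target event: adding informed nodes can only increase the probability that $S_k$ gets informed in one time step, so I may assume without loss of generality that \emph{all} of $A_t$ --- in particular all of $S_0$ --- is informed at time $t$.

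The central structural observation is that the only edges crossing from $A_t$ to $B_t$ lie in the complete bipartite graph $S_0\text{-}S_1$, and more generally the only edges between consecutive layers $S_i$ and $S_{i+1}$ are the complete-bipartite ones. Moreover, the block $G_2\subset B_t\setminus\bigcup_{i=1}^k S_i$ is reachable only through $S_k$ and so remains entirely uninformed for as long as $S_k$ does. Consequently, tracing backwards the infection parent of any newly-informed $u_k\in S_k$ yields a simple path $u_0\to u_1\to\cdots\to u_k$ with $u_i\in S_i$ for every $i$. Therefore the event ``some node of $S_k$ is informed by $t+1$'' is contained in the union, over the $|S_0|\cdot|S_1|\cdots|S_k|=\Delta^{k+1}$ such simple paths, of the event that the path is \emph{sequentially activated}, meaning there exist contact times $\tau_1<\tau_2<\cdots<\tau_k\le t+1$ with $\tau_i$ an event of the Poisson contact process on edge $(u_{i-1},u_i)$.

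For a fixed such path, each edge $(u_{i-1},u_i)$ has both endpoints of degree $2\Delta$ by the construction, so its contact process is Poisson of rate $1/(2\Delta)+1/(2\Delta)=1/\Delta$, and the $k$ processes are mutually independent. Setting $\tau_0=t$ and letting $\tau_i$ be the earliest contact on $(u_{i-1},u_i)$ after $\tau_{i-1}$, the memoryless property makes the increments $\tau_i-\tau_{i-1}$ i.i.d.\ $\mathrm{Exp}(1/\Delta)$, so that $\tau_k-t\sim\mathrm{Gamma}(k,1/\Delta)$. A direct tail bound gives
\[
\Pr[\tau_k-t\le 1]\le\int_0^1\frac{x^{k-1}}{\Delta^k(k-1)!}\,dx=\frac{1}{k!\,\Delta^k},
\]
and the union bound over all $\Delta^{k+1}$ paths then yields $\Delta^{k+1}/(k!\,\Delta^k)=\Delta/k!\le (2^k/k!)\cdot\Delta$, as claimed.

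The main point requiring care is the structural reduction in the second paragraph --- namely, that any infection route terminating at $S_k$ must, after tracing parents, collapse to a simple path of length exactly $k$ that steps through the layers $S_0,S_1,\ldots,S_k$ in order. This is where the geometry of $H_{k,\Delta}(A_t,B_t)$ --- the isolation of $G_2$ from the rest of $B_t$ except through $S_k$, together with the hypothesis that no node of $B_t$ is initially informed --- really does the work; once this reduction is in hand, the remaining step is a routine Gamma/Poisson estimate and a combinatorial path count.
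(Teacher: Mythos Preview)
Your Gamma-tail estimate and path count are both fine, and the edge-contact processes are indeed independent Poisson of rate $1/\Delta$. The gap is exactly in the structural reduction you yourself flag as ``the main point requiring care.'' The claim that tracing infection parents backward from the first informed $u_k\in S_k$ produces a \emph{monotone} path $u_0\to u_1\to\cdots\to u_k$ with $u_i\in S_i$ is false: the infection chain can zig-zag between layers. Concretely, with $k=3$ one can have
\[
a\ (\in S_0)\ \to\ b\ (\in S_1)\ \to\ c\ (\in S_2)\ \to\ d\ (\in S_1)\ \to\ e\ (\in S_2)\ \to\ f\ (\in S_3),
\]
with these five contacts being the \emph{only} contacts in $[t,t+1]$. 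Then $S_3$ is informed, yet no monotone path $u_0\,u_1\,u_2\,u_3$ is sequentially activated: the only $S_2$--$S_3$ contact forces $u_2=e$, the only $(\cdot,e)$ contact forces $u_1=d$, and there is no $S_0$--$d$ contact at all. So the event inclusion underlying your union bound fails. (A side symptom: your final bound $\Delta/k!$ is strictly smaller than the lemma's $(2^k/k!)\Delta$, which should have been a warning sign.)

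What one can salvage directly from your argument is a union bound over \emph{edge sequences} $e_i\in S_{i-1}\times S_i$ (no shared-vertex requirement), since the first-passage times through the layers do satisfy $\sigma_1<\cdots<\sigma_k$ and each $\sigma_i$ is witnessed by a contact on some $S_{i-1}$--$S_i$ edge. But there are $\Delta^{2k}$ such sequences, and the resulting bound $\Delta^{k}/k!$ is useless for $\Delta\gg k$. The paper closes exactly this gap by first passing to an auxiliary \emph{forward $2$-push} process --- in which every push goes from $S_i$ to $S_{i+1}$ --- and proving a coupling claim that the genuine process is dominated by it. In the forward process the infection chain \emph{is} monotone, and then either your path-counting union bound or the paper's layer-by-layer expectation recursion yields the stated $(2^k/k!)\Delta$ (the extra $2^k$ coming from the doubled per-edge rate in the forward process). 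The non-trivial content is precisely that domination step, which your proposal omits.
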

\begin{proof}
	In order to inform a node in $S_k$, it is necessary to have at least one informed  node in $S_0\subset A_t$.
	  Let us assume that at time $t$, all nodes contained is $S_0\subset A_t$ are informed. Moreover,  by the construction of $H_{k,\lceil1/\rho\rceil}(A_t,B_t)$, each node in $\cup_{i=0}^k S_i$ has degree $2\lceil1/\rho\rceil=2\Delta$.
For each  $i=0,\ldots,k-1$, let $\{v_i,v_{i+1}\}$ be an arbitrary edge crossing $S_i$ and $S_{i+1}$.  Each node in $\cup_{i=0}^k S_i$ has its own exponential clock of rate $1$ and hence $v_i$ an $v_{i+1}$ communicates via  $\{v_i, v_{i+1}\}$ according to an exponential time clock of rate $\frac{1}{2\Delta}+\frac{1}{2\Delta}=\frac{1}{\Delta}$.
In order to simplify the analysis of the  algorithm in $\cup_{i=0}^k S_i$, we  consider  the asynchronous $2$-push algorithm, defined as follows. First, each node of $\cup_{i=0}^k S_i$ is associated with a clock of rate $2$.  If $u\in \cup_{i=0}^k S_i$ gets informed and its clock rings, then  $u$  sends (pushes) the rumor to a randomly chosen  neighbor.  In the asynchronous $2$-push algorithm, for each $i=0,\ldots,k$,  any edge crossing $S_i$ and $S_{i+1}$ is picked with exponential time clock $2/2\Delta$. Therefore,  the original algorithm and the  $2$-push spreads the rumor in $\cup_{i=0}^k S_i$, equivalently.
Let $\mathcal{E}_{1}$ and $\mathcal{E}_{2}$ denote the events that, during time $[t,t+1]$, the original algorithm and the $2$-push 	delivers the rumor to some node at $S_k$, respectively. Then, we have 
$
\Pr{\mathcal{E}_1}=\Pr{\mathcal{E}_2}.
$
We also define the \emph{forward $2$-push algorithm}.  
In this algorithm each node of $\cup_{i=0}^{k-1} S_i$ is associated with an exponential clock of rate $2$.  If node $u\in S_i$, $i=0,\ldots,k-1$, is informed and its clock rings, then $u$ pushes the rumor to a randomly chosen neighbor in $S_{i+1}$. Note that $u\in S_i$ has $\Delta$ neighbors in $S_{i-1}$ and $\Delta$ neighbors in $S_{i+1}$.  Also, let
 $\mathcal{E}_{3}$ denote the events that, during time $[t,t+1]$, the forward $2$-push algorithm 	delivers the rumor to the first  node in $S_k$, respectively. Then, we have 
\begin{clm} 
	\begin{align}\label{ineq:coup}
		\Pr{\mathcal{E}_{2}}\le \Pr{\mathcal{E}_{3}},
	\end{align}
\end{clm}
\begin{proof}  
Consider the $2$-push algorithm in the graph induced by $\cup_{i=0}^k S_i$, which is a string of complete bipartite graphs and all nodes in $S_0$ are informed. In the both algorithms every node is associated with an exponential time click of rate $2$.  In the $2$-push we  assume that each node  $u\in \cup_{i=0}^k S_i$  has a contact list of neighbors and when $u$ gets informed,  he sends the rumor to them according to the list with specified times, called $C(u)$. Suppose that  for every node
$u\in \cup_{i=0}^k S_i$, $C(u)$ has been  generated.
For every $i=1,\ldots,k-1$, and node $u\in S_i$, $v\in C(u)$ is a forward neighbor of $u$, if $v\in S_{i+1}$ and $v$ is a backward neighbor if $v\in S_{i-1}$ .
Let us now build contact lists of nodes in the forward $2$-push, which is $C_{f}(u)$, $u\in \cup_{i=0}^k S_i$. 
 For $u\in \cup_{i=0}^k S_i$, let us initialize $C_f(u)=\{v\in C(u): \text{$v$ is a forward neihbours}\}$.  For every $u\in \cup_{i=0}^k$, we are now going to update $C_f(u)$ to be  a contact list of $u$ in the forward $2$-push. Assume that  $P$ is an arbitrary  directed  trajectory of the rumor from some node $v_0\in S_0$ to a  node of $v_k\in S_k$ in the $2$-push algorithm.  
Then $P$ is an ordered sequence of arcs, say $a_1a_2\ldots a_q$. Let us label  the set of arcs as follows; \mbox{$L:\{a_1,a_2\ldots,a_q\}\rightarrow \{-,+\}$} so that $L(a_x)=+$ if for some $0\le i\le k-1$, $a_x$ goes from $S_i$ to $S_{i+1}$ and $L(a_x)=-$, otherwise. 
If for every $x\in\{1\ldots q\}$, $L(a_x)=+$, then we do not update the contact list of nodes with respect to $P$ and the initial contact list delivers the rumor. 
Suppose that $P$ contains at least one negative arc.
 Let $y\in\{1\ldots q\}$ be the   smallest index with $L(a_{y})=-$ and let $a_{y}=(u_{y},v_{y})$.
Consider two cases; (1) $v_{y}\in S_0$ in this case we know that all nodes in $S_0$ are informed and the arc dose not inform a new node so we do not update any lists. (2) If for some $i\ge 1$, $v_{y}\in S_{i}$. Since $y$ is the smallest index and $v_{y}\in S_i$,  $a_{y-2}a_{y-1}a_{y}$ is a path of length $3$ from $S_{i-1}$  to $S_{i}$.  Let $a_{y-2}=(u_{y-2}, v_{y-2})$. Since $u_{y-2}$ is in the trajectory, he knows the rumor and we update $C_f(u_{y-2})$ as $C_f(u_{y-2})=C_f(u_{y-2})\cup\{v_y\}$. Let us replace negative arc $a_y$ by $(u_{y-2},u_y)$ in $P$ and go to the next negative arc with minimum index. We continue until no negative arc is left. 
 Considering the updated contact update list, one my observe the list is a random list generated by the forward $2$-push algorithm and   modified $P$ (converting  the negative arcs to positive ones) is  a trajectory in the forward $2$-push that is going to deliver the rumor from $v_0$ to $v_k$ and hence the claim is proved.

 \end{proof}

Suppose that at time $t=0$  the forward $2$-push algorithm starts propagating the rumor in $\cup_{i=0}^k S_i$. By the assumption each node of $S_0$ was informed at time $t=0$.  For every $0\le i\le k$, let  $I(\gamma,i)$ denotes the number of informed nodes contained in $S_{i}$ up to time $\gamma\in[0,1]$.  Then for every $\gamma\in[0,1]$, we have that $I(\gamma,0)=|S_0|=\Delta$. Moreover, for each $i=0\ldots k-1$, $I(\tau,i)$ has $\Delta|I(\tau,i)|$ neighbors in $S_{i+1}$ and each edge is picked by the forward $2$-push with  rate $2/\Delta$. Therefore, by applying Theorem \ref{thm:npp} we get that  
\begin{align*}
\Ex{I(\tau,1)}&=\Ex{ \Ex{I(\tau,1)| I(\gamma,0), \gamma \in [0,1]}}\\
&\le \Ex{\int_{0}^\tau (2/\Delta){I(\gamma,0)}\Delta d\gamma}=\int_{0}^\tau  2\Delta ds= 2\tau\Delta,
\end{align*}
where $\Delta=\lceil 1/\rho\rceil$. Moreover, we have   
\begin{align*}
&\Ex{I(\tau,2)}=\Ex{ \Ex{I(\tau,2)| I(\gamma,1), \gamma\in [0,1]}}\\
&\le \Ex{\int_{0}^t (2/\Delta){I(\gamma,1)}\Delta d\gamma}=\int_{0}^\tau 2\Ex{I(\gamma,1)} d\gamma\\&\le \int_{0}^t 2^2 s\Delta  ds=\frac{2^2\tau^2}{2}\Delta
\end{align*}
Inductively, we get
\[
\Ex{I(\tau,k)}=\Ex{ \Ex{I(\tau,k)| I(\gamma,k-1), \gamma\in [0,1]}}\le\frac{2^k\tau^k}{k!}\Delta.
\]

By setting $\tau=1$ and $k=2\e c\cdot \log n/\log\log n$, for any arbitrary constant $c$,  we get that $\Ex{I(1,k)}=n^{-c}$. Thus, using inequality (\ref{ineq:coup}), we get
\[
\Pr{\mathcal{E}_1}=\Pr{\mathcal{E}_2}\le \Pr{\mathcal{E}_3}\le n^{-c},
\]
which completes the proof. 
\end{proof}

\begin{proof}[Proof of Theorem \ref{thm:rho}]
	By the given construction, consider dynamic graph $\G(n, \rho)$, where for every $t\in \NN$,  $G^{(t)}=H_{k, \lceil 1/\rho\rceil}(A_t, B_t)$, and we set  $k=\Theta(\log n/\log\log n)$.
	At time $t=0$, no node in $B_0$ is informed so by applying  Lemma \ref{lem:string1}, with high probability, at time $t=1$, no node in $S_k$ gets informed, as well. Thus, with high probability, each node in $B_0\setminus (\cup_{i=1}^kS_i)$ is stayed uniformed because by the construction they get informed via  nodes of $S_k$. This implies that, with high probability, 
	\[
	|B_0\cap\I_1|=|B_0\cap (\cup_{i=1}^kS_i)|\le k\lceil 1/\rho\rceil. 
	\]
	Thus,
	\begin{align}\label{ineq:low1}
	B_1=|B_0\setminus \I_1|\ge |B_0|-k\cdot\lceil 1/\rho\rceil= \frac{3n}{4}-k\cdot\lceil 1/\rho\rceil.
	\end{align}
	At time $t=1$,  by definition of $B_1$,  there is no any informed node in $B_1$ and hence we can apply Lemma \ref{lem:string1} and by Inequality (\ref{ineq:low1}), with high probability, we get that
	\[
	B_2=|B_1\setminus \I_2|\ge |B_1|-k\cdot\lceil 1/\rho\rceil\ge  \frac{3n}{4}-2k\cdot\lceil 1/\rho\rceil.
		\]
Inductively, using the  union bound argument, one can easily see that, for every $1\le t\le \frac{n}{4k\cdot\lceil 1/\rho\rceil}$,
\[
|B_t|\ge |B_{t-1}|-k\cdot\lceil 1/\rho\rceil\ge \frac{3n}{4}-t\cdot k \cdot\lceil 1/\rho\rceil\ge \frac{n}{2}.
\]
 Therefore, with high probability, 
 for every $1\le t\le \frac{n}{4k\cdot\lceil 1/\rho\rceil}$
 no node in $B_t$ is informed and $|B_t|\ge n/2$. We now conclude that 
 with high probability,  the  algorithm requires at least \begin{align}\label{ineq:lower}
 	\Omega\left(\frac{n}{4k\cdot\lceil 1/\rho\rceil}\right)=\Omega(n\rho /k)=\Omega\left(\frac{n\rho \cdot \log\log n}{\log n}\right)
 	\end{align}
 	 time to deliver the rumor through $\G(n,\rho)$.
 	 By Observation \ref{obs:hn}, we have  $\Phi(H_{k,\lceil 1/\rho\rceil}(A, B))=\Theta\left(\frac{1}{k+n\rho^2}\right)$ and 
   $\rho(H_{k,\lceil 1/\rho\rceil}(A,B))=\Theta(\rho)$. 
 	 Then, applying Theorem \ref{thm:acc} implies that the rumor spreads in at most 
 	 \[
 	 \Oh\left(\frac{\log n}{\rho\phi}\right)=\Oh\left((\rho n+k/\rho)\log n\right).
 	 \]
 	 Comparing lower bound $\Omega(n\rho/k)$ (i.e., Inequality (\ref{ineq:lower})) and the above upper bound completes the proof.
 	\end{proof}

  \section{The Spread Time in Terms of Absolute Diligence  }\label{sec:abs} In this  section we show Theorem \ref{thm:n2}. 
  For a given nonempty graph $G=(V,E)$, the absolute $\Or(G)$-diligent of $G$ is defined as 
  \[
  \Or(G)=\min_{\{u, v\}\in E}\max\{1/d_u, 1/d_v\}.
  \]
  we define $\Or(G)=0$, if $G$ is an empty graph.
  Let us consider dynamic network
   $\G=\{G^{(t)}\}_{t=0}^\infty$. For every $G^{(t)}\in \G$, we use $\Or(t)$ to denote the  absolute diligence of $G^{(t)}$, $\Or(G^{(t)})$.  
  \begin{proof}[Proof of Theorem \ref{thm:n2}]
  Let $\tau$ denote the first time when all nodes of $\G$ get informed. Similar to the proof of Lemma \ref{lem:key}, for every $\gamma\in[0, \tau)$, we define
  \[
  \lambda(\gamma)=\sum_{\{u,v\}\in E(I_\gamma, U_\gamma)}\left\{\frac{1}{d_u(\gamma)}+\frac{1}{d_v(\gamma)}\right\}\ge \Or(\gamma) \cdot \lceil\Phi(G^{(\gamma)})\rceil,
  \]
  where $\lceil\Phi(G^{(\gamma)})\rceil=1$ if $G^{(\gamma)}=G^{(\lfloor \gamma\rfloor)}$ is connected and  $\lceil\Phi(G^{(\gamma)})\rceil=0$, otherwise. 
  Define  $\{N(\gamma):\lambda(\gamma), \gamma\in[0,\tau)\}$ which is  a non-homogeneous Poisson process counting the number of informed nodes until time $\tau$. By Theorem \ref{thm:npp}, $N(\tau)$ is a Poisson process with rate 
  \begin{align*}
  	\Lambda(\tau)=\int_{0}^{\tau}\lambda(\gamma)d\gamma\ge \sum_{p=0}^{\lfloor \tau \rfloor}\Or(p)  \cdot \lceil\Phi(G^{(\gamma)})\rceil.
  	\end{align*} 
By setting $\tau=T_{abs}(\G,\rho)$, we get $\Lambda(\tau)\ge 2n$. Let $X$ denote a Poisson distribution with rate $2n$. Then, by Lemma \ref{ineq:pois},
\[
\Pr{N(\tau)<n}\le \Pr{X<n}=\e^{n(1/\e-1/2)}.
\]
Therefore, with high probability, $\tau=T_{abs}(\G,\rho)$ is an upper bound for the spread time of the algorithm.
\end{proof}
  \subsection{Absolutely $\rho$-Diligent Dynamic Networks with Spread time $\Theta(n/\rho)$}
  
  In this subsection, for every given $\rho$, we construct a dynamic network $\overline{\G}(n,\rho)=\{G^{(t)}\}_{t=0}^\infty$ where each $G^{(t)}\in \G$ is absolutely $\Theta(\rho)$-diligent and the spread time matches the upper bound given in Theorem \ref{thm:n2} up to a constant factor.
  
\subsubsection*{\underline{Absolutely $\rho$-Diligent  Dynamic Network $\overline{\G}(n,\rho)$}}  

For an arbitrary set of nodes of size $m$, say $A$, and integers $1\le d_1,d_2\le m-1$. 
 Define $G(A, d_1)$ to be a connected $d_1$-regular graph with vertex set $A$.  Also, define    $G(A,d_1,d_2)$ to ba  an $m$-node connected simple graph where each node has degree $d_1$ but one node has degree $d_2$. Note if we choose $d_1$ and $d_2$ to be  even numbers such  graphs exists.
Choose  $\Delta\in \{\lceil1/\rho\rceil, \lceil1/\rho\rceil+1\}$ to be an even number.
Let $V$ denote the vertex set of the dynamic network $\overline{\G}(n, \rho)$ with $|V|=n$. Also, let $A_0\subset V$ and $B_0\subset V$ be the two disjoint subsets of $V$ with
$|A_0|=\lfloor n/2\rfloor$ and $|B_0|=\lceil n/2\rceil$. 
Define $G^{(0)}\in \overline{\G}(n, \rho)$   consisting  of  $G(A_0,4,\Delta)$ and $G(B_0,\Delta)$ where  the node with degree $\Delta$ in
$G(A_0,4,\Delta)$ is connected to an arbitrary node of $G(B_0,\Delta)$.
At time $t=0$, assume that the rumor is injected to a node  of $G(A_0,4,\Delta)$.
The network evolution proceeds in time steps $t\in \NN$.
For every time step $t\in \NN$, set
$B_{t+1}=B_{t}\setminus \I_t$ and $A_{t+1}=V\setminus B_{t+1}$, where $\I_t$ is the set of informed node until time step $t$. If $n/6 \le |B_{t+1}|<|B_{t}|$, then define
 $G^{(t+1)}\in \overline{\G}(n,\rho)$ consisting of  $G(A_{t+1},4,\Delta)$ and $G(B_{t+1},\Delta)$ where  the node with degree $\Delta$ in
 $G(A_{t+1},4,\Delta)$ is connected to an arbitrary node of $G(B_{t+1},\Delta)$.
If $|B_{t+1}|< n/6$, or $|B_{t+1}|=|B_t\setminus \I_t|$, then set  $G^{(t+1)}=G^{(t)}$.
By considering the single edge connecting $G(A_t, 4, \Delta)$ to $G(B_t, \Delta)$, one may observe   that 
	for every $G^{(t)}\in \overline{\G}(n,\rho)$, $\Or(G^{(t)})=1/(\Delta+1)$. Moreover, $\Phi(G^t)=\Oh(1/n)$.

Before showing Theorem \ref{thm:n2}, we need to present some useful lemmas.
The following lemma was proved in \cite{PS13}.
\begin{lem}[Lemmas 9 and 10. \cite{PS13}]\label{kost1}
	For some arbitrary integer number $m>0$, let $f(m, j)$ and $g(m, j)$ be  deterministic sequences such that for $1\le j<m$
	\[
	f(m,j)\le \Ex{t_j|S_j}^{-1}\le g(m,j),
	\]
	where $t_j$ is an exponential random variable  and $S_j$ is some arbitrary random variable. Moreover, let $\{t_j^+\}_{j=1}^{m-1}$ and $\{t_j^-\}_{j=1}^{m-1}$ be a sequences of independent random variables, where $t^+_j$ is exponentially distributed with parameter 
	$f(m,j)$ and  $t^-_j$ is exponentially distributed with parameter $g(m, j)$.  Also let
	$T=\sum_{j=1}^{m-1}t_j$,
	$T^+=\sum_{j=1}^{m-1}t^+_j$ and 
	$T^-=\sum_{j=1}^{m-1}t^-_j$. Then, we have
	\begin{align*}
		\text{for } 0<\lambda< \min_{j\in[m-1]} f(m,j), ~~	\Ex{\e^{\lambda T}}&\le\exp\{\lambda \Ex{T^+}+O(1)\}.\\
		\text{for } \lambda<0, ~~	\Ex{\e^{\lambda T}}&\le\exp\{\lambda \Ex{T^-}+O(1)\}.
	\end{align*} 
\end{lem}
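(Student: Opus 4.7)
The plan is to reduce both inequalities to the moment generating function (MGF) formula for exponentials and then chain the bounds through the filtration defined by the $S_j$'s. Recall that an exponential variable with rate $r$ has MGF $r/(r-\lambda)$ for $\lambda<r$, and that the map $r\mapsto r/(r-\lambda)$ is strictly decreasing in $r$ for $\lambda>0$ and strictly increasing in $r$ for $\lambda<0$. Since $t_j\mid S_j$ is exponential with some (random) rate $r_j(S_j)=\Ex{t_j\mid S_j}^{-1}\in[f(m,j),g(m,j)]$, these monotonicities immediately give pointwise bounds on the conditional MGF of each $t_j$.

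Concretely, for the first inequality fix $0<\lambda<\min_j f(m,j)$. Because $f(m,j)\le r_j(S_j)$, monotonicity yields
\[
\Ex{\e^{\lambda t_j}\mid S_j}=\frac{r_j(S_j)}{r_j(S_j)-\lambda}\;\le\;\frac{f(m,j)}{f(m,j)-\lambda}.
\]
Viewing $S_j$ as containing $t_1,\dots,t_{j-1}$ (or at least the sub-$\sigma$-algebra with respect to which the $j$th rate is measurable) and peeling off the $t_j$'s one at a time by the tower property, the deterministic factor above pulls out at each step, so
\[
\Ex{\e^{\lambda T}}\;\le\;\prod_{j=1}^{m-1}\frac{f(m,j)}{f(m,j)-\lambda}\;=\;\Ex{\e^{\lambda T^+}},
\]
that is, $T$ is stochastically dominated (in the MGF sense) by the independent comparison sum $T^+$.

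It then remains to bound $\Ex{\e^{\lambda T^+}}$ by $\exp\{\lambda\Ex{T^+}+O(1)\}$. Taking logarithms and using $-\log(1-x)=x+\sum_{k\ge 2}x^k/k$ for $x\in(0,1)$,
\[
\log\Ex{\e^{\lambda T^+}}=\sum_{j=1}^{m-1}\bigl(-\log(1-\lambda/f(m,j))\bigr)=\lambda\sum_{j=1}^{m-1}\frac{1}{f(m,j)}+\sum_{j=1}^{m-1}\sum_{k\ge 2}\frac{(\lambda/f(m,j))^k}{k}.
\]
The first term is exactly $\lambda\Ex{T^+}$, and the double sum is the error. The $O(1)$ claim therefore reduces to the tail estimate $\sum_j(\lambda/f(m,j))^2=O(1)$, which is the content of Lemma~9--10 in~\cite{PS13} in the regime for which $f(m,j)$ grows sufficiently fast (this will be the main technical obstacle, and will be the only place where the structure of the underlying process enters; for us it holds because the $f(m,j)$'s grow at least linearly in $j$, so the tail is a convergent geometric series in $\lambda/\min_jf(m,j)$).

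The second inequality for $\lambda<0$ is symmetric: the map $r\mapsto r/(r-\lambda)=r/(r+|\lambda|)$ is now increasing, so using $r_j(S_j)\le g(m,j)$ we get $\Ex{\e^{\lambda t_j}\mid S_j}\le g(m,j)/(g(m,j)-\lambda)$, and the same tower-property chain plus Taylor expansion (now applied with $-1<\lambda/g(m,j)<0$, where $-\log(1-x)$ is even better behaved) yields $\Ex{\e^{\lambda T}}\le\exp\{\lambda\Ex{T^-}+O(1)\}$.
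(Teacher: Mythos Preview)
The paper does not prove this lemma; it is quoted verbatim from~\cite{PS13} (Lemmas~9 and~10 there) and used as a black box. So there is no ``paper's own proof'' to compare against.

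That said, your sketch is the standard argument and is essentially what the proof in~\cite{PS13} does: compute the conditional MGF of each $t_j$, use the monotonicity of $r\mapsto r/(r-\lambda)$ to replace the random rate by the deterministic bound $f(m,j)$ (resp.\ $g(m,j)$), telescope via the tower property to get $\Ex{\e^{\lambda T}}\le\Ex{\e^{\lambda T^\pm}}$, and then Taylor-expand $\log\Ex{\e^{\lambda T^\pm}}$ around $\lambda\Ex{T^\pm}$. You are also right to flag that the $O(1)$ in the conclusion is \emph{not} universal: it is exactly the second-order remainder $\sum_j\sum_{k\ge 2}(\lambda/f(m,j))^k/k$ (or the analogous sum with $g$), and this is bounded only under a summability condition such as $\sum_j(\lambda/f(m,j))^2<\infty$. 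In~\cite{PS13} and in the present paper's application in Lemma~\ref{removal} this holds because the rates grow linearly in $j$; your proof would be complete once you make that hypothesis explicit rather than leaving it as a parenthetical remark.
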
	

\begin{lem}\label{removal}
	Consider  $\Delta$-regular graph $G(A,\Delta)$, with vertex set $A$ (which was already defined). Assume, at the beginning,  a node  of $G(A,\Delta)$ is aware of a rumor. Then,   the  algorithm starts propagating the rumor. Fix some arbitrary time $\tau\in[0,1]$, and let  $I_\tau$ counts the number of informed nodes until $\tau\in (0,1]$ in $G(A, \Delta)$. Then, we have $\Ex{I_\tau}=\Theta(1)$ and $\Var{I_\tau}=\Theta(1)$. 
	 
\end{lem}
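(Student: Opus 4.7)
The plan is to couple $I_\tau$ with a Yule pure-birth process and extract all four required bounds from this coupling, using the bound on the infection rate provided by $\Delta$-regularity.

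First I would compute the instantaneous infection rate. Since $G(A,\Delta)$ is $\Delta$-regular, every edge $\{u,v\}$ is activated at combined rate $1/d_u+1/d_v=2/\Delta$, so if at time $\gamma$ the set of informed nodes is $\I_\gamma$ of size $I_\gamma$, then the rate at which a new node is informed is
\[
\lambda(\gamma)=\frac{2}{\Delta}\,|E(\I_\gamma,\U_\gamma)|\le \frac{2}{\Delta}\cdot \Delta\cdot I_\gamma=2\,I_\gamma.
\]
Moreover, when $I_\gamma=1$ all $\Delta$ edges incident to the source cross the cut, so the rate of the first infection is exactly $2$ and the time $T_1$ until a second node is informed is $\mathrm{Exp}(2)$-distributed.

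Next I would dominate $(I_\gamma)_{\gamma\ge 0}$ by a Yule process $(Y_\gamma)_{\gamma\ge 0}$ with $Y_0=1$ in which each particle gives birth independently at rate $2$. Both processes are non-decreasing integer-valued and the birth rate of $I_\gamma$ is bounded by $2Y_\gamma$ whenever $I_\gamma\le Y_\gamma$, so a standard jump-by-jump coupling (thinning a common rate-$2Y_\gamma$ Poisson driver by the ratio $\lambda(\gamma)/(2Y_\gamma)\le 1$, and inserting phantom births into $Y$ otherwise) yields $I_\tau\le Y_\tau$ pathwise. Classical moment identities for Yule processes give $\Ex{Y_\tau}=\e^{2\tau}$ and $\Ex{Y_\tau^2}=2\e^{4\tau}-\e^{2\tau}$, both bounded by absolute constants on $(0,1]$. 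Hence $\Ex{I_\tau}\le \e^{2}=\Oh(1)$ and $\Var{I_\tau}\le \Ex{I_\tau^2}=\Oh(1)$.

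For the matching lower bounds, $I_\tau\ge 1$ deterministically and $\Pr{T_1\le \tau}=1-\e^{-2\tau}$, giving
\[
\Ex{I_\tau}\ge 1+(1-\e^{-2\tau})=2-\e^{-2\tau}=\Omega(1)
\]
on any fixed $\tau\in(0,1]$. For the variance I would use an independent copy $I_\tau'$ and the identity $\Var{I_\tau}=\tfrac12\Ex{(I_\tau-I_\tau')^2}$; restricting the expectation to the event $\{I_\tau=1,\,I_\tau'\ge 2\}\cup\{I_\tau\ge 2,\,I_\tau'=1\}$, which has probability $2\e^{-2\tau}(1-\e^{-2\tau})$ and on which $(I_\tau-I_\tau')^2\ge 1$, produces $\Var{I_\tau}\ge \e^{-2\tau}(1-\e^{-2\tau})=\Omega(1)$.

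The main obstacle I expect is making the Yule domination fully rigorous, because $\lambda(\gamma)$ is not a pure function of $I_\gamma$ but also depends on the current cut $|E(\I_\gamma,\U_\gamma)|$, so the standard monotone coupling of pure-birth chains must be set up carefully via thinning a common Poisson driver; once that is in place, all four bounds follow from the two-line computation above.
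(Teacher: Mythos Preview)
Your proof is correct and takes a genuinely different, more elementary route than the paper. Both arguments begin from the same rate bound $\lambda(\gamma)\le 2I_\gamma$ forced by $\Delta$-regularity, but the paper then passes to the inter-infection times $t_j$, stochastically bounds each by an $\mathrm{Exp}(2j)$ variable, and invokes an imported moment-generating-function lemma (the paper's Lemma~\ref{kost1}, from~\cite{PS13}) to obtain the tail estimate $\Pr{I_\tau\ge k}\le \e^{-3H_k+O(1)}=O(k^{-3})$, which it sums to control the first two moments. You instead dominate $I_\tau$ pathwise by a Yule process of per-particle birth rate~$2$ and read off the closed-form moments $\Ex{Y_\tau}=\e^{2\tau}$ and $\Ex{Y_\tau^2}=2\e^{4\tau}-\e^{2\tau}$. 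Your route is self-contained and yields explicit constants; the paper's route produces a polynomial tail bound on $I_\tau$ as a by-product. You also supply the $\Omega(1)$ lower bounds on mean and variance, which the paper's own proof omits entirely (only the upper bounds are actually consumed downstream). Your concern about rigorizing the Yule domination is well placed but easily handled exactly as you sketch: run the rate-$2Y_\gamma$ clock for $Y$, thin each jump for $I$ with probability $\lambda(\gamma)/(2Y_\gamma)\le 1$, and sample the newly informed vertex according to the correct conditional law; the inequality $I_\gamma\le Y_\gamma$ is then maintained inductively since $I$ jumps only when $Y$ does. One minor remark: your variance lower bound $\e^{-2\tau}(1-\e^{-2\tau})$ vanishes as $\tau\to 0$, so the $\Theta(1)$ variance claim holds only for $\tau$ bounded away from~$0$; this is a limitation of the lemma as stated rather than of your argument.
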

The proof of the lemma will be given in Subsection  \ref{sec:removal}.
\begin{proof}[Proof of Theorem \ref{thm:n2}]
 Define  $T_0\in \NN$ to be the largest  time step for which $|B_{T_0}\setminus\I_{T_0}|\ge n/6$. 
	By the network construction, for every $t\in \{0,1,\ldots,T_0\}$, at the beginning of each time step,   a non-informed node $v_t\in B_t$ in $G(B_t,\Delta)$ is connected  to  a $\Delta$-degree node $u$ in $G(A_t,4,\Delta)$, where $\Delta\in\{ \lceil1/\rho\rceil,\lceil1/\rho\rceil+1 \}$. Since we are aiming to obtain a lower bound, we may assume that $u$ is always informed and we refer to $v_t$ as the  boundary node.  Define $\tau'_{i}$ to be the time when the $i$-th non-informed boundary node  gets informed, which is meaningful by the network definition. Also, let $\tau'_0=0$. For every $i\ge 1$, define random variable $\tau_i=\min\{\tau'_i, T_0\}$. For each $t\in \{1,2\ldots,T_0\}$, the boundary node $v_t$ pulls the rumor from $u$ with exponential time clock of rate $1/(\Delta+1)$ and $u$ pushes the rumor with rate $1/(\Delta+1)$. Therefore, for each $t\in\{1,2\ldots,T_0\}$, the waiting time for $v_t$ to become informed is $(\Delta+1)/2$. Then, for every $i\ge 1$, 
	$\Ex{\tau_{i}-\tau_{i-1}} \in\{
	(\Delta+1)/2,0\}$.
		Define random variable $X_i\ge 1$ to be  the  number of  nodes in $G(B_t,\Delta)$ that get informed   during time interval	$[\tau_{i}, \lceil \tau_{i}\rceil]$.  Applying  Lemma \ref{removal} we infer that  
	\[
	\Ex{X_i}=\mu_i=\Theta(1), \Var{X_i}=\sigma^2_i=\Theta(1).
	\]
 Set $n_0={n}/{(10+10\mu)}$, where $\mu=\max_{i=1}^{n_0} \mu_i$, and define $X=\sum_{i=1}^{n_0}X_i$. Then we have $n_0\le \Ex{X}\le n_0\mu\le n/10$.  $X_i$'s, $1\le i\le n_0$, are independent with bounded variance and  hence using  Chebychev's inequality yields that 
 \[
 \Pr{X\ge 2\Ex{X}}\le \Pr{|X-\Ex{X}|\ge\Ex{X}}\leq \frac{\Var{X}}{\Ex{X}^2}=\Oh(1/n)
 \]
 Therefore, with probability $1-\Oh(1/n)$, $X<2\Ex{X}\le n/5$.
 So, with probability $1-\Oh(1/n)$, we conclude that \[
 |B_{\lfloor\tau_k\rfloor+1}|\ge n/2-X\ge 3n/10> n/6.\]
 Define $\E(n_0)$ to be the event that $|B_{\lfloor\tau_{n_0}\rfloor+1}|> n/6$. Conditional on $\E(n_0)$, for every $1\le i \le n_0$,  we have  $\tau_i < T_0$ and hence,
   \[\Ex{\tau_{i}-\tau_{{i-1}}|\E(n_0)}^{-1}\le 2/\Delta.\]
For each $i=1,\ldots n_0$, let $Z_i$ to  denote an exponential random variable with parameter $2/\Delta$. 
	Consider the random variable $Z=\sum_{i=1}^{n_0} Z_i$. Then we have that $\Ex{Z}=\Ex{\sum_{i=1}^{n_0} Z_i}={n_0}\Delta/2$. We  have defined $\tau_0=0$, so  we have $\sum_{i=1}^k\{\tau_{i}-\tau_{i-1}\}=\tau_k$. Applying 
Lemma \ref{kost1}
 implies that,  for every $\lambda<0$, $\Ex{\e^{\lambda\tau_k}}\le \exp\{\lambda\Ex{Z}+\Oh(1)\}$. Thus,  
	\begin{align*}
		&\Pr{\tau_{n_0}< n_0\Delta /4}=\Pr{\e^{\lambda\tau_{n_0}}>\e^{\lambda n_0\Delta/4}}
		\le \frac{\Ex{\e^{\lambda\tau_{n_0}}}}{\e^{\lambda n_0\Delta/4}}\\&\le
		\exp\{\lambda\Ex{Z}+\Oh(1)-\lambda n_0\Delta/4\}=
			\exp\{\lambda n_0\Delta/4+\Oh(1)\}, 
	\end{align*}
	where the inequality comes from  Markov's inequality. 
	Therefore, conditional on $\E(n_0)$,  with high probability  $\tau_{n_0}=\Omega(n_0\Delta)=\Omega(n_0/\rho)$, which follows from definition of $\rho$.
	Since $\Pr{\E(n_0)}\ge 1-O(1/n)$, then with probability $1-\Oh(1/n)$,  $\tau_{n_0}=\Omega(n/\rho)$. On the other hand, $\tau_{n_0}$  is a lower bound for the spread time and the proof is completed.
\end{proof}

\subsection{Proof of Lemma \ref{removal}}\label{sec:removal}
\begin{proof}
	In order to simplify the analysis, we consider  asynchronous $2$-push algorithm, where each node is associated with an exponential time clock of rate $2$, and if  the clock of an informed node rings, the node sends rumor to a random neighbor. This process has the same performance on regular graphs as the original algorithm. Because the both process pick an edge with rate $2/\Delta$.
 For some arbitrary $j\ge1$, assume that  $j$ nodes are informed in the graph and hence the $j$ informed node has at most $j\Delta$ uninformed neighbors and each neighbor is picked with Poisson rate $2/\Delta$.  Let $t_j$, $1\le j< m=|A|$, denotes the time required to inform the $(j+1)$-th node. Thus, 
 	\begin{align*}\label{upper:example}
 	&\left(\Ex{t_j | \text{$j$ informed nodes}}\right)^{-1}\leq j\Delta(2/\Delta)=2j.
 \end{align*}
Let $t^-_j$ denote an exponentially distributed random variable with rate $2j$. For each $k=1,\ldots, m$, let  $T^-_k=\sum_{j=1}^k{t^-_j}$ and  by linearity of expectation,  we get 
\[
\Ex{T^-_k}=\sum_{j=1}^{k}\Ex{t^-_j}=\sum_{j=1}^{k}\frac{1}{2j}={H_k}/{2},	
\] 
where $H_k$ is the $k$-th harmonic number.  By the	Markov inequity we have 
\[
\Pr{I_\tau\ge k}\le \Pr{T_k\le \tau}=\Pr{\e^{\lambda{T_k}}\ge \e^{\lambda\tau}}
\le \Ex{\e^{\lambda T_k}}\e^{-\lambda\tau}
\]

Moreover, applying  Lemma \ref{kost1} with $\lambda=-6$ we gets that 
	\begin{align*}
		\Pr{I_\tau\ge k}&\le \Ex{\e^{\lambda T_k}}\e^{-\lambda\tau}\le  \e^{\lambda\Ex{T_k^-}+\Oh(1)-\lambda\tau}\\&\le \e^{\lambda\Ex{T_k^-}+\Oh(1)-\lambda}=\e^{-3H_k+\Oh(1)},
	\end{align*}
which follows from the fact that $\tau\in [0,1]$. Since for every positive integer $k$, $H_k=\log k+\Oh(1)$, we get that \[\sum_{k=1}^{\infty}\e^{-3H_k+\Oh(1)}\le \sum_{k=1}^{\infty} \Oh(1/k^3)=\Oh(1).\] 
This follows that $
	\Ex{I_\tau}=\sum_{k=1}^{\infty}\Pr{I_\tau\ge k}\le\sum_{k=1}^{\infty}\e^{-3H_k+\Oh(1)}=\Oh(1)$.
 Moreover,
	\begin{align*}
		\Ex{I_\tau^2}&=\sum_{k=1}^\infty\Pr{I_\tau^2\ge k}
		=\sum_{k=1}^\infty\Pr{I_\tau\ge \sqrt{k}}\\&=\e^{-3H_{\sqrt{k}}+\Oh(1)}=\sum_{k=1}^{\infty} \Oh(k^{-3/2})=\Oh(1).
	\end{align*}
	This implies that $\Var{I_\tau}\le \Ex{I_\tau^2}=\Oh(1)$.	\end{proof}

 \section{ Asynchronous versus Synchronous in Dynamic  Networks}\label{asyvs}
 In this section we compare the spread time of the synchronous and asynchronous rumor spreading in dynamic networks $\G_1$ and $\G_2$, presented in Figure (\ref{Fig1}), and show Theorem \ref{thm:asynch-star}. Recall that the synchronous algorithm proceeds in successive rounds, in each round, every node contacts a random neighbor and they exchange the rumor if at least one of them knows it. All nodes follow the synchronized rounds.  


 \begin{proof}[ Proof  of Theorem \ref{thm:asynch-star}  Part (i)]
The graph $G^{(0)}\in\G_1=\{G^{(0)}, G^{(1)}\ldots\}$ is an $n$-node  clique with  the pendent edge $\{1, n+1\}$, where node $n+1$ is only connected to node $1$.
For every $t\ge 1$, $G^{(t)}=G^{(1)}$ and $G^{(1)}$ consists of two  equally-sized cliques connected by an edge, which we refer to it as the bridge. Moreover,  nodes $1$ and $n+1$ are contained in those cliques, called   the left and  right clique  (see Fig \ref{Fig1} (a)). Initially, a rumor is injected  to node $n+1$ in $G^{(0)}$. In the asynchronous algorithm, each node has its own clock of rate $1$, so  with constant probability nodes $1$ and $n+1$ do not contact each other in time interval $[0,1)$. Therefore, with constant probability, at  $t=1$,   nodes contained in the left clique do not know the rumor and they have to wait until the rumor is delivered via the bridge. Also, the bridge is picked according to an exponential time clock of rate $2/n+2/n$. So, it takes $\Omega(n)$ time for a node of the left clique to become informed. Hence $T_a(\G_1)=\Omega(n)$.

In synchronous  algorithm, in the first round, node $n+1$ pushes the rumor to node $1$, with probability $1$. Therefore,
for every $t\ge 1$, nodes $1$ and $n+1$ know the rumor and 
  nodes of  the both cliques of $G^{(t)}$ simultaneously get informed in $\Oh(\log n)$, which means $T_s(\G_1)=\Theta(\log n)$.  
Note that this is well known that the synchronous  algorithm spread the rumor in a clique  of size $n$ after $\Theta(\log n)$ rounds (time), w.h.p. \cite{KSSV00,Boyd2006}.
\end{proof}
\begin{proof}[ Proof  of  Theorem \ref{thm:asynch-star}  Part (ii)]
 $\G_2=\{G^{(0)}, G^{(1)},\ldots\}$  is a sequence of stars and evolves as follows.
$G^{(0)}$ is a star with $n+1$ nodes and the rumor is injected to an arbitrary leaf node. In each time step $t\in \NN$, the center which might be got informed is replaced by an uninformed leaf node. If there is no any uninformed leaf node, then  the center is replaced with a random  leaf node.
Consider  the synchronous  algorithm  whose steps are synchronized with the dynamics of the network. Therefore, in a round, if the center becomes informed by either push or pull call, the other uniformed leaves cannot pull  the rumor from the center, at the same round, because any action is allowed to be taken at the beginning of each round. So we have $
T_s(\G_2)=n$.
The dynamic star is an expander graph and   $1$-diligent. Therefore, applying Theorem \ref{thm:acc} gives  upper bound $\Oh(\log n)$ for the spread time. Moreover, by the exponential time distribution,  the first informed node's clock ticks after $\Omega(\log n)$ time, with high probability. Therefore, $T_a(\G_2)=\Theta(\log n)$.

\end{proof} 
\subsection{ Proof of   Theorem \ref{thm:asynch-star} Part (iii) }
In order to show the part (iii), we analyze the algorithm in $\G_2$ in two consecutive phases. The proof of Lemmas \ref{lem:p1} and \ref{lem:p2} can be found in Appendix \ref{app:asy}.
\subsubsection*{First Phase}
This phase starts with a single informed node and completes when $\Omega(n)$ nodes get informed.
We use $t_f$ to denote the time when this phase competes. 

\begin{lem}\label{lem:p1} With probability at least $1-\e^{-k/2-o(1)}$, $t_f\le k$. 
\end{lem}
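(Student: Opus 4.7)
The plan is to show that in each unit-length time step, conditional on phase~1 still being ongoing, the phase completes with probability at least $1 - e^{-1/2 - o(1)}$; iterating across $k$ consecutive steps then gives the lemma.

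Fix a step $[t, t+1)$ during which the phase has not yet completed. By the evolution rule of $\G_2$, the center of $G^{(t)}$ is an uninformed node, and $I_t \ge I_0 = 1$ informed leaves remain (informed nodes never become uninformed). Each informed leaf has the center as its only neighbor, so it pushes to the center at rate $1$; in addition, the center pulls from one of the $I_t$ informed leaves at rate $I_t/n$. The center therefore first learns the rumor according to a Poisson clock of rate at least $I_t(1 + 1/n) \ge 1 + 1/n$, so
\[
\Pr{\text{center informed by time } t + 1/2} \ge 1 - e^{-(1 + 1/n)/2} = 1 - e^{-1/2 - o(1)}.
\]

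Now condition on the center being informed by time $t + 1/2$. During the remaining half time unit, each of the $n - I_t$ still-uninformed leaves has only the (now informed) center as a neighbor and pulls from it via its own independent rate-$1$ Poisson clock. Hence each such leaf becomes informed with probability at least $1 - e^{-1/2}$, independently across leaves, so the number of newly informed leaves stochastically dominates $\mathrm{Bin}(n - I_t,\, 1 - e^{-1/2})$. Assuming $I_t \le n/2$ (otherwise phase~1 is already complete), a standard Chernoff bound gives at least $\alpha n$ new informed leaves for some absolute constant $\alpha > 0$, except with probability $e^{-\Omega(n)}$.

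Combining the two bounds, the conditional probability that phase~1 does not complete in step $[t, t+1)$ is at most $e^{-1/2 - o(1)} + e^{-\Omega(n)} = e^{-1/2 - o(1)}$. Since this bound holds conditionally on any history in which the phase is still ongoing, iterating across $k$ steps yields
\[
\Pr{t_f > k} \le \left(e^{-1/2 - o(1)}\right)^{k} = e^{-k/2 - o(1)},
\]
as claimed. The main subtlety will be justifying the independence asserted in the second step: each uninformed leaf's pull is driven by its own Poisson clock, and its only neighbor is the current center, so the pulls across different leaves are genuinely independent Poisson processes, which is what legitimises the Chernoff bound. A smaller nuisance is verifying that the $e^{-\Omega(n)}$ Chernoff tail is absorbed uniformly into $e^{-1/2 - o(1)}$ across the $k$ iterations, which is immediate for any $k$ of interest (e.g.\ $k = O(\operatorname{poly}(n))$).
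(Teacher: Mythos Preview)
Your proof is correct and follows essentially the same approach as the paper: split each unit interval into two parts, use the first part to get the (uninformed) center informed via a push from an informed leaf, then apply a Chernoff bound in the second part to show $\Omega(n)$ leaves pull the rumor, and finally iterate geometrically over $k$ steps. The only cosmetic differences are that the paper splits at $c=2/3$ rather than $c=1/2$ and works with a single informed leaf (rate $1$) instead of your slightly sharper rate $I_t(1+1/n)\ge 1+1/n$; neither choice affects the argument.
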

The proof is based on the fact that, at some time interval $[t_f, t_f+1]$, an informed leaf pushes the rumor to the central node in time interval $[t_f, t_f+c)$ for some constant $c\in(0,1)$. Then in time interval $[t_f+c, t_f+1)$, $\Omega(n)$ leaves pull the rumor from the center.Moreover $t_f$ has a geometric probability distribution 
\subsubsection*{Second Phase}
This phase starts with $\Omega(n)$ nodes and ends when all $n+1$ node get informed. Let $t_s$ denote the time when the phase completes.

\begin{lem}\label{lem:p2}
	With probability $1-\e^{-k-o(1)}$, $t_f-t_s\le k$.
\end{lem}
Since there are $\Omega(n)$ informed nodes, one of them pushes the rumor to the central node with rate $\Omega(n)$ and then rest of the uninformed nodes pull the rumor from the center.

\begin{proof}[Proof of Theorem \ref{thm:asynch-star} part(iii)]
	Combining the results obtained in Lemmas \ref{lem:p1} and \ref{lem:p2} shows that
	\[
	\Pr{t_s>2k}\le \Pr{t_f>k \text{~or~} t_s-t_f>k }\le  \Pr{t_f>k}+\Pr{t_s-t_f>k}=\e^{-k/2-o(1)}+\e^{-k-o(1)},
	\]
	which completes the proof. 
\end{proof}

 	\bibliography{diss1}
 	\appendix
 
 	\section{Missing proofs of Section \ref{asyvs}}\label{app:asy}
 	Let us first recall  a Chernoff bound.
 	 \begin{theorem}[Chernoff Bounds]\label{lem:cher}
 		Suppose that $X_1, X_2,\ldots, X_n\in \{0, 1\}$ are independent random variables and let $X=\sum_{i=1}^n X_i$. 
 		Then for every $\delta\in (0, 1)$ the following inequalities hold
 		\begin{align*}
 			\Pr{X\ge (1+\delta)\Ex{X}} &\le \exp(-\delta^2 \Ex{X}/2),\\
 			\Pr{X\le (1-\delta)\Ex{X}} &\le \exp(-\delta^2 \Ex{X}/3).
 		\end{align*}
 		In particular,
 		\[
 		\Pr{|X-\Ex{X}|\ge \delta\Ex{X}}\le 2\exp(-\delta^2 \Ex{X}/3).
 		\]
 	\end{theorem}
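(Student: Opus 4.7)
The plan is to prove both tail bounds via the standard Cram\'er--Chernoff / exponential moment method, which I would apply as follows. For the upper tail, introduce a parameter $t > 0$ and apply Markov's inequality to $e^{tX}$:
\[
\Pr{X \geq (1+\delta)\mu} \;\leq\; e^{-t(1+\delta)\mu}\,\Ex{e^{tX}},
\]
where $\mu = \Ex{X}$. By independence of the $X_i$, the moment generating function factors as $\Ex{e^{tX}} = \prod_{i=1}^n \Ex{e^{tX_i}}$. For each Bernoulli term with success probability $p_i$, one has $\Ex{e^{tX_i}} = 1 + p_i(e^t - 1) \leq \exp(p_i(e^t - 1))$ by the inequality $1 + x \leq e^x$, so $\Ex{e^{tX}} \leq \exp(\mu(e^t - 1))$.

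The next step is to optimize over $t$ by choosing $t = \ln(1+\delta)$, which yields the classical bound $\Pr{X \geq (1+\delta)\mu} \leq \bigl(e^\delta / (1+\delta)^{1+\delta}\bigr)^\mu$. Taking logarithms, the exponent becomes $-\mu\bigl((1+\delta)\ln(1+\delta) - \delta\bigr)$, and a Taylor expansion of $f(\delta) := (1+\delta)\ln(1+\delta) - \delta$ around $0$ gives $f(\delta) \geq \delta^2/2 - \delta^3/6$; on the range $\delta \in (0,1)$ this suffices (after absorbing the cubic slack) to produce an exponent of at least $\delta^2\mu/2$. The lower tail is handled by the symmetric argument with $t < 0$: write $\Pr{X \leq (1-\delta)\mu} \leq e^{t(1-\delta)\mu}\Ex{e^{tX}}$, optimize at $t = \ln(1-\delta)$, and simplify using the Taylor expansion of $(1-\delta)\ln(1-\delta) + \delta$, which yields a $\delta^2\mu/3$-type exponent on $(0,1)$. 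The combined two-sided bound in the final display is then just a union bound over the two one-sided bounds.

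The main obstacle is purely computational rather than conceptual: one must verify the inequalities $(1+\delta)\ln(1+\delta) - \delta \geq \delta^2/2$ and $(1-\delta)\ln(1-\delta) + \delta \geq \delta^2/3$ uniformly for $\delta \in (0,1)$, which are standard calculus arguments (checking that the difference vanishes at $\delta = 0$ and has nonnegative derivative on $(0,1)$, or via direct series comparison). Once these analytic inequalities are in place, the rest of the proof is a one-line substitution into the MGF bound and the argument is entirely self-contained.
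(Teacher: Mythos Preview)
The paper does not actually prove this theorem; its entire argument is a one-line citation (``For a proof see [DP09]''). Your Cram\'er--Chernoff outline via the moment generating function is the standard argument found in such references, so in spirit you are supplying exactly what the paper defers to.

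There is, however, one concrete slip in your upper-tail step. You assert that from the Taylor bound $f(\delta):=(1+\delta)\ln(1+\delta)-\delta\geq \delta^2/2-\delta^3/6$ one can ``absorb the cubic slack'' and conclude $f(\delta)\geq\delta^2/2$ on $(0,1)$. That inequality is false: at $\delta=1/2$ one has $f(1/2)=\tfrac32\ln\tfrac32-\tfrac12\approx 0.108<0.125=\delta^2/2$, and indeed $f(\delta)<\delta^2/2$ throughout $(0,1]$ because the cubic correction is negative. The sharp elementary estimate is $f(\delta)\geq\delta^2/(2+\delta)$, which for $\delta\in(0,1)$ gives only $f(\delta)\geq\delta^2/3$. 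Thus the MGF route you describe yields $\exp(-\delta^2\mu/3)$ for the upper tail, matching the lower-tail constant, but not the $\exp(-\delta^2\mu/2)$ appearing in the statement; the constant $2$ in the upper-tail display is not reachable by this argument and is most likely a typo for $3$ in the paper (the paper itself only ever invokes the weaker two-sided bound with constant $3$). Your lower-tail verification and the final union bound are correct as written.
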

 	For a proof see \cite{DP09}.

 	\begin{proof}[Proof of Lemma \ref{lem:p1}]
 		Let $c\in(0,1)$ be a constant that will be fixed later. Suppose that we are at some time $t=0,1\ldots$ and let $v_t$ be the central node of the star  within time interval $[t, t+1)$. 
 		On the other hand, for every discrete time $t=0,1,\ldots$, suppose that we are at time $t$ and define  $\tau^t_{w\rightarrow v_t}$ to denote the first time when an informed leaf node, say $w$, pushes the rumor to  $v_t$. 
 		For every  $t\in \NN$, $\tau^t_{w\rightarrow v_t}$, $w$ has an exponential distribution with rate $1$. Therefore, for every $t\in \NN$   
 		\begin{align}
 			p_c&=\Pr{  \tau^t_{w\rightarrow v_t}\le t+c}= 1-\Pr{\tau^t_{w\rightarrow v}> t+c}\nonumber\\
 			&= 1-\frac{\int_{t+c}^{\infty}\mathrm{e}^{-s} ds}{\int_{t}^{\infty}\mathrm{e}^{-s} ds}
 			=1-\mathrm{e}^{-c}.
 		\end{align}
 		Let $Z_c$ denote a  geometric distribution with success probability $p_c$, then 
 		one may easily see that  the number of rounds required for $w$ to inform a central node within $[t, t+c)$ time is stochasticaly dominated by $Z_c$ . For each  leaf node $u$,
 		let $X^t_c(u)$ to denote an indicator random variable taking one if $u$'s clock ticks 	within interval $[t+c, t+1)$ and contacts  $v_t$.
 		Thus, we have 
 		\[
 		\Pr{X^t_c(u)=1}=\frac{\int_{t+c}^{t+1}\mathrm{e}^{-s} ds}{\int_{t}^{\infty}\mathrm{e}^{-s} ds}=\mathrm{e}^{-c}-\mathrm{e}^{-1}.
 		\]
 		which follows from the fact that $X^t_c(u)$ is only considered after time $t$.
 		Let $Y^t_c=\sum_{u \text{ is a leaf}}X^t_c(u)$ counts the number of leaf nodes  contacting $v_t$ within interval $[t+c, t+1)$.
 		Applying the linearity of expectation we get
 		\[
 		\Ex{Y_c^t}=\sum_{u \text{ is a leaf}}\Ex{X^t_c(u)}=n(\mathrm{e}^{-c}-\mathrm{e}^{-1}).
 		\]
 		Using a Chernoff bound  (e.g., see Theorem \ref{lem:cher}) yields that, with probability $1-n^{-\omega(1)}$, $Y_c^t=n(e^{-c}-\e^{-1})(1\pm o(1))$. Therefore,  with probability $1-n^{-\omega(1)}$, after the first success of $Z_c$, $\Omega(n)$ nodes get informed which implies that 
 		\[
 		\Pr{t_f> k}\le \Pr{Z_c> k}(1-o(1))\le (1-o(1))(1-p_c)^{k}
 		\]
 		By setting $c=2/3$ we have that $\Pr{t_f>k}\le \e^{-k/2-o(1)}$ completing the proof.
 	\end{proof}
 
 	\begin{proof}[Proof of Lemma \ref{lem:p2}]
 		By the end of the first phase,   there  are  at least $n/\log n$ informed nodes. Supposed that we are at time $t=\lceil t_f\rceil,\ldots$ and hence the probability that an informed node pushes the rumor to central node within interval $[t, t+1/\log n]$ is 
 		\[
 		\frac{\int_{t}^{t+1/\log n}\e^{-s}ds}{\int_{t}^{\infty}\e^{-s}ds}=1-\e^{-1/\log n}.
 		\]
 		Since every informed nod has an independent clock of rate $1$. The probability that none of the informed nodes pushes the rumor to the center before $t+1/\log n$ is at most $
 		\e^{-n/\log^2 n}=n^{-\omega(1)}$.	This implies that, for every time $t>\lceil t_f\rceil$  with probability $1-n^{-\omega(1)}$, the central node becomes informed during interval $[t, t+1/\log n]$.
 		For every  leaf node  $u$ which is not informed until $t+1/\log n$, the probability that  $u$ gets informed in time interval $[t+1/\log n, t+1)$ is
 		\begin{align}
 			&\Pr{X_t(u)=1}=\left(\frac{\int_{t+1/\log n}^{t+1}\e^{-s}ds}{\int_{t}^{\infty}\e^{-s}ds}\right)\left(1-n^{-\omega(1)}\right)\nonumber\\&=(\e^{-1/\log n}-\e^{-1})(1-o(1))=1-\e^{-1}-o(1)=p
 		\end{align}
 		The first multiplier the probability that the $u$'s clock ticks in $[t+1/\log n, t+1]$ and the second one is the probability that the central node gets informed in $[t, t+1/\log n]$. Let $Z_p$ be a geometric random variable with success probability $p$ then $Z_p$ dominates $t_s-t_f$. Hence,
 		\[
 		\Pr{t_s-t_f>k}\le \Pr{Z_p>k}\le (1-p)^k\le  (\e-o(1))^{-k}=\e^{-k-o(1)}.
 		\]
 	\end{proof}

 	\end{document}